\DeclareMathOperator{\marg}{marg}
\DeclareMathOperator{\supp}{supp}
\DeclareMathOperator{\poisson}{Poisson}
\DeclareMathOperator{\proj}{proj}
\newtheorem{definition}{Definition}
\newcounter{example}
\newenvironment{example}[1][]{\refstepcounter{example}\par\medskip
   \noindent \textbf{Example~\theexample. #1} \rmfamily}{\medskip}
\newtheorem{proposition}{Proposition}
\newtheorem{theorem}{Theorem}
\newtheorem{lemma}{Lemma}
\theoremstyle{remark}
\newtheorem{remark}{Remark}
\title{Level-$k$ Reasoning, Cognitive Hierarchy, and Rationalizability\thanks{The author thanks the valuable comments and discussions of Pierpaolo Battigalli. She appreciates the informative talks with Fabio Maccheroni and Zsombor Z. M\'{e}der and their encouragements.}}
\author{Shuige Liu\footnote{Bocconi University, Via Roentgen, 1, Milano, 20136 MI, Italy.  Email: \textsf{shuige.liu@unibocconi.it}}\\ September 4, 2024}
\date{}         
\begin{document}

\maketitle
            
\begin{abstract}
\textbf{\abstractname.} 
We employ a unified framework to provide an epistemic-theoretical foundation for Camerer, Ho, and Chong’s  \cite{chc04} cognitive hierarchy (CH) solution and its dynamic extension, using the directed rationalizability concept introduced in Battigalli and Siniscalchi \cite{bs03}. We interpret level-$k$ as an information type instead of specification of strategic sophistication, and define restriction $\Delta^\kappa$ on the beliefs of information types; based on it, we show that in the behavioral consequence of rationality, common belief in rationality and transparency of $\Delta^\kappa$, called  $\Delta^\kappa$-rationalizability, the strategic sophistication of each information type is endogenously determined. We show that in static games, the CH solution generically coincides with $\Delta^\kappa$-rationalizability; this result also connects CH with Bayesian equilibrium. By extending $\Delta^\kappa$ to dynamic games, we show that Lin and Palfrey's \cite{lp24} dynamic cognitive hierarchy (DCH) solution, an extension of CH in dynamic games, generically coincides with the behavioral consequence of rationality, common strong belief in rationality, and transparency of (dynamic) $\Delta^\kappa$. The same framework can also be used to analyze many variations of CH in the literature. 
\end{abstract}





\section{Introduction}\label{sec:int}
The cognitive hierarchy (CH) model, introduced in Camerer, Ho, and Chong's \cite{chc04} seminal work, provides a non-equilibrium solution based on the intuitive idea of level-$k$ reasoning. Since then, CH has been widely studied and applied in behavioral economics. Yet we still need to integrate the concept into a formal framework to analyze rigorously the underlying strategic reasoning mechanisms and the stability. Researches devoted to this topic can be categorized into two groups. One group seeks to modify the concept to build an equilibrium (for example, Strzalecki \cite{st14}, Koriyama and Ozkes \cite{ko21}, Levin and Zhang \cite{lz22}); the other aims to study how a player' strategic sophistication is endogenously determined (for example, Alaoui and Penta \cite{ap16}, Friedenberg, Kets, and Kneeland \cite{fkk21}).

This paper attempts to give an answer by providing an epistemic-theoretical foundation to both CH and its extension in dynamic cognitive hierarchy (DCH, Lin and Palfrey \cite{lp24}) by the epistemically founded solution concept directed-rationalizability introduced by Battigalli and Siniscalchi \cite{bs03}.\footnote{In Battigalli and Siniscalchi \cite{bs03} the notion is called $\Delta$-rationalizability. In the late few years, it has started to be called \emph{directed} rationalizability in the literature to emphasize that some specified restrictions on players' conjectures are assumed to be transparent, which directs the result of solution procedure toward a subset of outcomes. See, for example, Chapter 8.3 of Battigalli, Catonini, and De Vito \cite{betal20} for a detailed discussion.} At first sight, one might doubt the methodological compatibility between behavioral economic concepts such as CH and analytical tools in epistemic game theory (EGT). Indeed, CH is motivated by bounded rationality and henceforth is regarded as incompatible with infinite cognitive hierarchies, while EGT is established upon two canonical assumptions: rationality and infinite hierarchies of belief  (that is, I believe that you believe that I believe that...) of rationality.\footnote{As a matter of fact, the pioneering researches of level-$k$ reasoning started from questioning the assumptions of rationality and common belief of rationality. See, for example,  Stahl \cite{sd93}.} However, the incompatibility can be dissolved by distinguishing two interpretations of level-$k$. In one interpretation, a player reasons with a shallow depth because she has some exogenous constraints such as cognitive or time limits (for example, Kaneko and Suzuki \cite{ks03}, Rubinstein \cite{ar03}); in the other, strategic sophistication is endogenously determined (for example, Alaoui and Penta \cite{ap16}, Friedenberg, Kets, and Kneeland \cite{fkk21}). Our contribution is along the lines of the second argument. In our model, ``level-$k$'' is just the name of an information type without any indication on how deep a player is able to reason; a player with a level-$k$ type believes that (a)  the opponents are of some level-$t$ with $t <k$, and (b) any player with level-$n$ type ($n \in \mathbb{N}$) does not reason more than $n$ levels''. We call the content of the belief \emph{Fact} $\kappa$. Intuitively, a commonly belief in Fact $\kappa$, that is, everyone believes in it, everyone believes that everyone believes it, etc, would lead a player with level-$k$ type reason only $k$ steps. In this manner, we reconcile infinite belief hierarchies and the behavior predicted by CH solution.




The critical factor of the analysis is to define explicitly contextual restrictions on beliefs and to integrate them into a hierarchical system. The appropriate instrument is the framework developed in Battigalli and Siniscalchi \cite{bs03}. By extending Pearce's \cite{pe84} rationalizability concept, Battigalli and Siniscalchi \cite{bs03} study explicit and general epistemic conditions of agents' knowledge and beliefs (for example, rationality, common belief in rationality); their solution concept, called directed rationalizability, accommodates also restrictions on beliefs imposed by the context (denoted by $\Delta$).\footnote{One should distinguish epistemic game theory (EGT) and the solution concepts justified by EGT. See Battigalli and Siniscalchi \cite{bs07} and Battigalli and Prestipino \cite{bp13} for proofs and discussions on that some given version of directed rationality characterizes the behavioral implications of suitable epistemic/doxastic assumptions.} Here, we formulate Fact $\kappa$ as a restriction (denoted by $\Delta^\kappa$) exogenously given by the context on players' beliefs, and we characterize the the behavioral consequence of rationality, common belief of rationality, and that Fact $\kappa$ holds and is commonly believed to be true by all players (called ``\emph{transparent}'' in the EGT literature) by a solution concept called $\Delta^\kappa$-rationalizability. Proposition \ref{coj:lk} verifies our conjecture above and shows that our model faithfully captures the intuition of level-$k$ reasoning: even though there is no restriction on how deep a player could reason, a player with the level-$k$ type reasons at most $k$ levels. Theorem \ref{coinc} shows that $\Delta^\kappa$-rationalizability and CH solution coincide generically; this implies that we have provided CH solution a substantial epistemic-theoretical foundation. As an implication, in Proposition \ref{pro:bay} we use a result in Battigalli and Siniscalchi \cite{bs03} to connect CH solution with Bayesian equilibrium. Further, by adapting restriction $\Delta^\kappa$ into dynamic games, Theorem \ref{dcoinc} shows that we also provide an epistemic foundation for Lin and Palfrey's \cite{lp24} dynamic cognitive hierarchy (DCH), an extension of CH in dynamic games, when we replacing common belief by common \emph{strong} belief, a classical extension by Battigalli and Siniscalchi \cite{bs02} of the former in multistage games by capturing a principle of best rationalization. In this manner, within a unified framework, we provide an epistemic-theoretical foundation for CH and DCH.






Returning to the two groups in the literature concerning the foundation of CH, our work is relevant to both. For the equilibrium-building literature, many CH-style equilibria there (for example, Levin and Zhang's \cite{lz22} $\lambda$-NLK) can be understood in our framework. For the endogenously-determined strategic sophistication literature, we provide a straightforward explanation on how the strategic sophistications are formed and how the corresponding behavior is generated. Further, this paper also belongs to the literature bridging behavioral economics and epistemic game theory (for example, Liu and Maccheroni \cite{lm23}), which aims to explicitly studying the epistemic conditions underlying behavioral game theoretical solution concepts to enhances our understanding of their applicability and to facilitate experimental test of epistemic assumptions. Especially, along with recent researches (for example, Jin \cite{jy21}), our results show that testing level-$k$ reasoning might be tricky and might need more subtle theoretical and experimental research. Indeed, the equivalence results (Theorems \ref{coinc} and \ref{dcoinc}) imply that having an infinite hierarchy of belief (our framework) or not (the classical assumption) cannot be distinguished by the observable behavior; further, our epistemic analysis shows that even though level-$k$ reasoning itself involves only a finite hierarchy of reasoning, to make the reasoning run, the role of Fact $\kappa$ (and the common belief of it as a ``common sense'') could be critical, which might have been overlooked so far and may require further investigation.






The rest of the paper is organized as follows. Section \ref{sec:sta} defines $\Delta^\kappa$-rationalizability in static games and studies its relationship with CH; Section \ref{deltab} connects CH solution with Bayesian equilibrium. Section \ref{sec:dyn} studies $\Delta^\kappa$-rationalizability in multistage games and establishes its relationship with DCH.  

\section{$\Delta^{\kappa}$-rationalizability in static games and CH solution}\label{sec:sta}

We start from static game. Actually, we could follow Battigalli and Siniscalchi \cite{bs03} and define $\Delta^\kappa$-rationalizability in dynamic game and take static game as a special case. We choose to do the other way around for pedagogical reasons: starting with the simpler case clarifies the structure and makes it comprehensive.


To ease the notation, without essential loss of generality, we focus on $2$-person games, which is most wildly used in the experimental research. Fix a finite static game $G = \langle I, (A_i, v_i)_{i \in I}\rangle$, where $I = \{1,2\}$ and for each $i \in I$, $A_i$ is the finite set of player $i$'s actions and $v_i : A_1 \times A_2 \rightarrow \mathbb{R}$ is her payoff function. Given $G$, to append to each player types related to levels, we define a game with payoff uncertainty $\hat{G} = \langle I, (A_i, \Theta_i, u_i)_{i \in I}\rangle$ as follows: 
\begin{itemize}
\item For each $i = 1,2$, let $\Theta_i = \{\theta_{i0}, \theta_{i1},...\} = \{\theta_{ik}: k \in \mathbb{N}_0\}$. Each $\theta_{ik}$ is called \emph{level-$k$ type} of player $i$. 

\item For each $i \in I$, $\theta =(\theta_i, \theta_{-i}) \in \Theta$ ($:= \Theta_1 \times \Theta_2)$, and $a \in A$ $(:=A_1 \times A_2)$, 
 \begin{equation*}
u_i(\theta, a) = 
  \begin{cases}
      0 & \text{if $\theta_i = \theta_{i0}$}\\
      v_i(a) & \text{otherwise}
 \end{cases}    
 \end{equation*}
\end{itemize}
One can see that the ``authentic'' payoff uncertainty is about level-$0$ type: for all other levels, $u_i$ depends only upon $a$, while for level-$0$ type, $u_i$ is constant regardless of action profile. In this manner, we can assume rationality also for players with level-$0$ type: such a player randomizes her choice not due to her lack of strategic reasoning ability but to the constancy of her payoff. This setting is not essential; yet it simplifies the model and ensures a unified conceptual approach.

Here, one should be careful about the interpretation: $\theta_{ik}$ is only a name of a cognitive state; the subscript $k$ \emph{per se} does not indicate anything related to strategic sophistication or cognitive abilities as many researches of bounded rationality and behavioral economics suggested (e.g., Stahl and Wilson \cite{sw95}). Hence, here, $k$ should be regarded as just  an ``index''. Later, as will be shown in Proposition \ref{coj:lk}, the strategic sophistication of each type is implied from the epistemic condition that is going to be stated. 
\medskip

At the beginning, each player $i$ has a belief $\mu^i \in \Delta(\Theta_{-i} \times A_{-i})$ about her opponent's types and actions. 
For simplicity, with a slight abuse of notation, for each $\theta_{-i} \in \Theta_{-i}$, we use $\mu^i(\theta_{-i})$ to denote the probability of $\theta_{-i}$ with respect to the marginal distribution of $\mu^i$ on $\Theta_{-i}$, that is, $\mu^i(\theta_{-i}) = \sum_{a_{-i} \in A_{-i}}\mu^i(\theta_{-i}, a_{-i})$. When $\mu^i(\theta_{-i})>0$, we use $\mu^i(\cdot|\theta_{-i})$ to denote the distribution generated from $\mu^i$ on $A_{-i}$ conditional on $\theta_{-i}$, that is, for each $a_{-i} \in A_{-i}$, $\mu^i(a_{-i}|\theta_{-i}) = \frac{\mu^i(\theta_{-i}, a_{-i})}{\mu^i(\theta_{-i})}$.

We now define explicitly the conditions on beliefs. Let $\Delta = (\Delta^1, \Delta^2)$ where for each $i = 1,2$, $\Delta^i = (\Delta^{\theta_i})_{\theta_i \in \Theta_i}$ and $\Delta^{\theta_i}\subseteq \Delta(\Theta_{-i} \times A_{-i})$ for each $\theta_i \in \Theta_i$. Here, each $\Delta^{\theta_i}$ describes the restriction on type $\theta_i$ of player $i$: for each $k \geq 1$ and $\mu^i \in \Delta(\Theta_{-i} \times A_{-i})$, $\mu^i \in \Delta^{\theta_{ik}}$ if and only if the following two conditions are satisfied:
\begin{enumerate}
\item[\textbf{K1}.] $\supp\marg_{\Theta_{-i}}\mu^i\subseteq\{\theta_{-i,0}, \theta_{-i,1},...,\theta_{-i,k-1}\}$,

\item[\textbf{K2}.] If $\mu^i(\theta_{-i,0}) >0$, $\mu^i(a_{-i}|\theta_{-i,0})=\frac{1}{|A_{-i}|}$ for each $a_{-i} \in A_{-i}$.\footnote{When generating to $n$-person games, following the tradition of behavioral economics, we also have to assume independence, i.e., $\mu^i = \prod_{j \neq i} \mu^i_j$.}
 \end{enumerate}
 %
 %
 %
%
%
%
K1 means that a player with type $\theta_{ik}$ only deems possible her opponent's types with smaller indices. K2 states that if the opponent's level-$0$ type is deemed possible, then the player with type $\theta_{ik}$ believes that her opponent (equally) randomize her choice conditional on $\theta_{-i,0}$.

First, there is no restriction for level-$0$ type's belief. Since the payoff function of a player with level-$0$ type is constant, her behavior is not affected by her beliefs. Also, note that $\Delta^{\theta_{i1}}$ is a singleton. Indeed, K1 implies that for each $\mu^i \in \Delta^{\theta_{i1}}$, $\supp\marg_{\Theta_{-i}}\mu^i = \{\theta_{-i,0}\}$, and it follows from K2 that $\mu^i$ is the distribution in $\Delta(\Theta_{-i} \times A_{-i})$ satisfying $\mu^i((\theta_{-i,0}, a_{-i}))=\frac{1}{|A_{-i}|}$. Yet, for a player with level $k >1$,  these conditions do not give any restriction on her belief about players with non-zero level types. 


 In the literature, some additional restrictions could be applied. A classical one is to assume that the belief of each type on the distribution of the her opponent's types is a normalization of some $f \in \Delta^{o}(\mathbb{N}_0)$.\footnote{Here, $f \in \Delta^{o}(\mathbb{N}_0)$ is the set of non-negative integers; $\Delta^{o}(\mathbb{N}_0)$ denotes the subset of interior points of $\Delta(\mathbb{N}_0)$, that is, $f \in \Delta^{o}(\mathbb{N}_0)$ if and only if $f(n) >0$ for each $n \in  \Delta^{o}(\mathbb{N}_0)$.} For example, since the seminal paper Camerer et al. \cite{chc04}, a prevalent choice for $f$ is the Poisson distribution. To comply with the literature, we add: 

\begin{itemize}
\item[\textbf{K3}.] $\mu^i(\theta_{-i,t}) = \frac{f(t)}{\sum_{\ell=0}^{k-1}f(\ell)} \text{ for each }t = 0,...,k-1$
\end{itemize}
In the following, for each $t,k$ with $t <k$, we denote $\frac{f(t)}{\sum_{\ell=0}^{k-1}f(\ell)}$ by $f^k(t)$. We call the restriction defined by K1 -- K3 $\Delta^\kappa$, which formalizes the intuitive ``Fact $\kappa$" in Section \ref{sec:int}. Those restrictions are on exogenous beliefs, that is, they are  imposed on the first-order belief of players. Further, we assume that $\Delta^\kappa$ is \emph{transparent}, that is, $\Delta^\kappa$ holds and it is commonly believed to hold.\medskip

In the literature of EGT, there are two canonical assumptions, \emph{rationality} and \emph{common belief of rationality}. Rationality means that a player maximizes her payoff to her belief. Here, a pair $(\theta_i, a_i)$ is \emph{consistent to rationality} iff $a_i$ is a best response under $\theta_i$ to some belief $\mu^i \in \Theta_i$, that is, for all $a_i \in A_i$, that is, for all $a^\prime_i \in A_i$,
\begin{equation*}
\sum_{\theta_{-i}, a_{-i} \in \Theta_{-i} \times A_{-i}} \mu^i((\theta_{-i}, a_{-i}))u_i(\theta_i, \theta_{-i}, a_i, a_{-i}) \geq \sum_{\theta_{-i}, a_{-i} \in \Theta_{-i} \times A_{-i}} \mu^i((\theta_{-i}, a_{-i}))u_i(\theta_i, \theta_{-i}, a^\prime_i, a_{-i})
\end{equation*}
\emph{Common belief} of an event means that everyone believes it, everyone believes that everyone believes it, and so on. Intuitively, one needs an iterative procedure to describe and analyze it. For instance, ``\emph{everyone believes rationality}'' means that each $i $'s belief $\mu^i$ only deems possible the type-action pairs consistent with rationality; a pair $(\theta_i, a_i)$ is \emph{consistent with rationality and belief in rationality} iff $a_i$ is a best response to $\theta_i$ for a such belief $\mu_i$. We can continue this procedure and see which type-action pairs survive. \footnote{Since our focus is characterizing behavioral implications of epistemic conditions (solution concept), here we only gave an informal and intuitive description of how event satisfying some epistemic conditions. For a formal elaboration of the latter with rigorous and explicit language, see, for example, Battigalli and Bonanno \cite{bb99} and Dekel and Siniscalchi \cite{ds15}.}

In addition to the two canonical assumptions which do not give any exogenous restrictions on beliefs, Battigalli and Siniscalchi \cite{bs03} examined exogenous (contextual) constraints on beliefs and studied their behavioral consequences. Here, the exogenous constraints are described in $\Delta^\kappa$. By applying their argument, the behavioral consequence of  rationality (R), common belief in rationality (CBR), and  transparency of $\Delta^{\kappa}$ (TCK) are characterized by the iterative procedure defined as follows. 




\begin{definition}\label{def2}
Consider the following procedure, called $\Delta^{\kappa}$\emph{\textbf{-rationalization procedure}}:

\textbf{Step 0}. For each $i \in I$, $\Sigma_{i,\Delta^{\kappa}}^{0} = \Theta_{i} \times A_{i}$,

\textbf{Step $n+1$}. For each $i \in I$ and each $(\theta_{i},a_{i}) \in \Theta_{i} \times A_{i}$ with $(\theta_{i},a_{i}) \in \Sigma_{i,\Delta^{\kappa}}^{n}$, $(\theta_{i},a_{i}) \in \Sigma_{i,\Delta^{\kappa}}^{n+1}$ iff there is some $\mu^{i} \in \Delta^{\theta_i}$ such that
\begin{enumerate}
\item $a_{i}$ is a best response to $\mu^{i}$ under $\theta_{i}$, and
\item $\mu^{i}(\Sigma_{-i,\Delta^{\kappa}}^{n})=1$. 
\end{enumerate}
Let $\Sigma_{i,\Delta^{\kappa}}^{\infty} =\cap_{n\geq 0}\Sigma_{i,\Delta^{\kappa}}^{n} $ for $i \in I$. The elements in $\Sigma_{i,\Delta^{\kappa}}^{\infty}$ are said to be \textbf{\emph{$\Delta^{\kappa}$-rationalizable}}.
\end{definition}

Here, a pair $(\theta_i, a_i)$ survives the first step only if $a_i$ is a best response for $\theta_i$ to some belief $\mu^i$ which satisfies conditions K1 -- K3; in other words, it is an outcome when rationality and the condition $\Delta^\kappa$ hold; it survives the second step only if $a_i$ is a best response for $\theta_i$ to some belief $\mu^i$ which satisfies conditions K1 -- K3 \emph{and} only deems possible her opponent's type-action pairs that survive the first step; in other words,  it is an outcome when rationality, $\Delta^\kappa$, \emph{and} the belief of rationality and $\Delta^\kappa$. Continuing this argument, it is intuitive to see that the procedure generates behavioral consequence of rationality, common belief in rationality, and transparency of $\Delta^\kappa$.

For each $n \in \mathbb{N}_0$ and $\theta_i \in \Theta_i$, we let $\Sigma^n_{\theta_i, \Delta^\kappa} = \{\theta_i\} \times \{a_i: (\theta_{i}, a_i) \in \Sigma^n_{i, \Delta^\kappa}\}$, that is, the ``section'' of $\Sigma^n_{i, \Delta^\kappa}$ with respect to $\theta_i$; in other words, $a_i \in \proj_{A_i}\Sigma^n_{\theta_{ik}, \Delta^\kappa}$ if and only if $a_i$ a best response for $\theta_{ik}$ to a belief consistent with $n$ rounds of the forementioned strategic reasoning. First, we have the following result.




\begin{proposition}\label{coj:lk}
For each $k \in \mathbb{N}_0$, $\Sigma^t_{\theta_{ik}, \Delta^\kappa}=\Sigma^k_{\theta_{ik}, \Delta^\kappa}$ for each $t \geq k$.

\end{proposition}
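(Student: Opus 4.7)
The plan is to prove the claim by strong induction on $k$, leveraging the key structural fact that K1 forces a level-$k$ type's beliefs to concern only lower-indexed types, which by the induction hypothesis have already stabilized by step $k-1$. First I would record a preliminary monotonicity observation that is immediate from Definition~\ref{def2}: the sets $\Sigma^n_{i,\Delta^\kappa}$ are weakly decreasing in $n$, and hence so are the sections $\Sigma^n_{\theta_{ik},\Delta^\kappa}$. Consequently, to prove the proposition it suffices to show the reverse inclusion $\Sigma^k_{\theta_{ik},\Delta^\kappa}\subseteq\Sigma^{t}_{\theta_{ik},\Delta^\kappa}$ for every $t\geq k$, which I will obtain by exhibiting, for each $(\theta_{ik},a_i)$ surviving step $k$, a single belief $\mu^i$ that serves as a witness at every subsequent step.

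For the base case $k=0$, since $u_i(\theta_{i0},\cdot)\equiv 0$ every action is a best response under $\theta_{i0}$ regardless of the belief, and $\Delta^{\theta_{i0}}=\Delta(\Theta_{-i}\times A_{-i})$ is unrestricted, so $\Sigma^t_{\theta_{i0},\Delta^\kappa}=\{\theta_{i0}\}\times A_i$ for every $t\geq 0$. For the inductive step, assume the claim holds for all $k'<k$ and fix $(\theta_{ik},a_i)\in\Sigma^k_{\theta_{ik},\Delta^\kappa}$ with witness $\mu^i\in\Delta^{\theta_{ik}}$. By K1, $\supp\marg_{\Theta_{-i}}\mu^i\subseteq\{\theta_{-i,0},\ldots,\theta_{-i,k-1}\}$. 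The induction hypothesis applied to each $k'=0,1,\ldots,k-1$ gives $\Sigma^{s}_{\theta_{-i,k'},\Delta^\kappa}=\Sigma^{k'}_{\theta_{-i,k'},\Delta^\kappa}$ for every $s\geq k'$; in particular, since $k-1\geq k'$ for every such $k'$, we have
\[
\Sigma^{k-1}_{\theta_{-i,k'},\Delta^\kappa}=\Sigma^{s}_{\theta_{-i,k'},\Delta^\kappa}\quad\text{for all }s\geq k-1\text{ and all }k'<k.
\]

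Combining this with the type-support restriction from K1, the set on which $\mu^i$ is concentrated is unchanged from step $k-1$ onward: $\mu^i(\Sigma^{k-1}_{-i,\Delta^\kappa})=\mu^i(\Sigma^{s}_{-i,\Delta^\kappa})=1$ for every $s\geq k-1$. Because $\mu^i\in\Delta^{\theta_{ik}}$ and $a_i$ is already a best response to $\mu^i$ under $\theta_{ik}$ (properties that do not depend on the step number), the same $\mu^i$ certifies $(\theta_{ik},a_i)\in\Sigma^{t}_{\theta_{ik},\Delta^\kappa}$ for every $t\geq k$. Together with the monotonicity observation, this yields equality for all $t\geq k$, completing the induction.

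The only subtle point, which I would highlight as the main bookkeeping step, is the alignment of the indices: one must verify that the induction hypothesis applies \emph{uniformly} to every type $\theta_{-i,k'}$ appearing in $\supp\marg_{\Theta_{-i}}\mu^i$, and that the threshold $k-1$ dominates each such $k'$. Restriction K1 is exactly what makes this possible, and K2--K3 play no role in the argument beyond ensuring $\Delta^{\theta_{ik}}$ is nonempty so that the fixed witness $\mu^i$ continues to lie in $\Delta^{\theta_{ik}}$ at each successive step.
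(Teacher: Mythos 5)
Your proof is correct and follows essentially the same route as the paper's: induction on $k$, using the fact that K1 confines a level-$k$ type's belief to lower-indexed types whose surviving sets have stabilized by step $k-1$, so the step-$k$ witness belief continues to certify survival at every later step. Your write-up is simply a more explicit version (adding the monotonicity observation and the fixed-witness bookkeeping) of the argument the paper sketches.
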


\begin{proof}
First, since under $\theta_{i0}$, player $i$'s payoff is constant, every action is optimal to $\theta_{i0}$ and $\Sigma^t_{\theta_{i0}, \Delta^\kappa}=\Sigma^0_{\theta_{i0}, \Delta^\kappa}$ for each $t \geq 0$. For $k=1$, as we noted above, at step 1, only a unique belief is allowed which only deems possible that the opponent has the level-0 type and chooses each action with equal likelihood. Hence, after step 1, $\Delta^{\kappa}$-rationalizable actions for $\theta_{i1}$ for each $i$ is fixed, that is, $\Sigma^t_{\theta_{i1}, \Delta^\kappa}=\Sigma^1_{\theta_{i1}, \Delta^\kappa}$ for each $t \geq 1$. For $\theta_{i2}$ player, though at step 1 she might have more freedom in beliefs, as $\theta_{-i,1}$ players' choices are fixed after step 1, her choices will also be fixed after step 2. In general, since $\Delta^\kappa$ requires that each $\theta_{ik}$ only deems possible that her opponent has a type with a smaller index than hers, it follows that in $\mu^i$'s support there are only pairs of types with smaller indices and actions that survived the previous step. Therefore, by induction, the statement is proved.
\end{proof}

Proposition \ref{coj:lk} states that each player with $\theta_{ik}$ type reasons at  most $k$ steps. Before, $\theta_{ik}$ is barely a name for a type; it is here that we show the $k$ in the subscript really indicates an upper bound of strategic sophistication. Note that this upper bound of reasoning depth is implied from the three epistemic assumptions, especially transparency of $\Delta^\kappa$: a player with $\theta_{ik}$ type reasons at most $k$ since it is unnecessary to go deeper.\medskip



\begin{example}\label{ex:beau}
\textbf{Beauty contest game}. Consider the Beauty Contest game $G = \langle I, (A_i, v_i)_{i \in I} \rangle$ such that for each $i \in I$, $A_i = \{0,1,...,100\}$. For each $a \in A$ $(:= \prod_{j \in I}A_j)$, let $a^* = \frac{2}{3}\times\frac{\sum_{j\in I}a_j}{|I|}$. The payoff function is defined as 
\begin{equation*}
v_i(a) = 
\begin{cases}
      1 & \text{if $|a_i-a^*| \leq |a_j - a^*|$ for all $j \neq i$}\\
      0 & \text{otherwise}
\end{cases}    
\end{equation*}
Consider the game with payoff uncertainty $\hat{G}$ based on it. Assume $f$ to be any distribution in $\Delta^{o}(\{0,1,...,100\})$. Let $I =\{1, 2\}$. One can see that for each $i =1,2$, $\sum_{\theta_{ik}, \Delta^{\kappa}}^{n} = \{(\theta_{ik}, 0)\}$ for each $n,k \geq 1$. Indeed, for each $k \geq 1$, $\mu^i \in \Delta^{\theta_{ik}}$ and $a_i, a_i^{\prime} \in A_i$ with $a_i < a_i^{\prime}$, 
\begin{equation*}
\mathbb{E}_{\mu^i}u_i(a_i, \cdot) - \mathbb{E}_{\mu^i}u_i(a_i^{\prime}, \cdot) \geq f^k(0)\left(\frac{a_i^{\prime} - a_i}{101}\right)>0
\end{equation*}
That is, every positive choice is strictly dominated by $0$.\footnote{Since for $k\geq 1$, $\theta_{ik}$ player's payoff does not rely upon $\theta$, we omit it from $u_i$ for simplicity. } Therefore, $\sum_{\theta_{ik}, \Delta^{\kappa}}^{n} = \{(\theta_{ik}, 0)\}$ for each $n,k \geq 1$.
\end{example}

Example \ref{ex:beau} shows that in some cases, a player with type $\theta_{ik}$ ($k \geq 2$) does not have to go through exactly $k$ steps. In general, Proposition \ref{coj:lk} implies that $k$ only provides an upper bound for type $\theta_{ik}$'s depth of reasoning, not necessarily the smallest. By looking at $\Delta^\kappa$-rationalization procedure carefully, one may notice that everyone reasons at each step, that is, for example, at step 1, not only $\theta_{i1}$ players reason (about $\theta_{j0}$ types); also, $\theta_{ik}$ with $k >1$ reasons (about $\theta_{j0}$ and other types). This might lead a $\theta_{ik}$ player to terminate her reasoning before reaching step $k$, as shown in Example \ref{ex:beau}. Hence, $\Delta^{\kappa}$-rationalization procedure is different from the algorithm to calculate the CH solution which is done ``one-by-one''. Here, we rephrase the classic definition (Camerer et al. \cite{chc04}) in a way that facilitates the comparison with $\Delta^{\kappa}$-procedure. 


 
\begin{definition}\label{def:chp}
Consider the following procedure, called the \textbf{\emph{CH-procedure}}:


\textbf{Step 0}. For each $i \in I$ and $\theta_i \in \Theta_i$, $\Psi_{\theta_i,\Delta^{\kappa}}^{0} = \{\theta_{i}\} \times A_{i}$,

\textbf{Step $n+1$}. For each $i \in I$, $\Psi_{\theta_{ik},\Delta^{\kappa}}^{n+1} = \Psi_{\theta_{ik},\Delta^{\kappa}}^{n}$ if $k \neq  n+1$; for each $(\theta_{i, n+1},a_i)$ with $a_i \in A_i$, $(\theta_{i,n+1},a_{i}) \in \Psi_{\theta_{i,n+1},\Delta^{\kappa}}^{n+1}$ iff there is some $\mu^i$ satisfying K1 -- K3 (that is, $\mu^{i} \in \Delta^{\theta_i}$) such that
\begin{enumerate}

\item $a_{i}$ is a best response to $\mu^{i}$ under $\theta_{i,n+1}$, 
\item $\mu^i$ satisfies the following conditions:
\begin{itemize}
\item[2.1.] $\supp\mu^{i}=\cup_{t=0}^{n}\Psi_{\theta_{-i,t},\Delta^{\kappa}}^{n}$,

\item[2.2.] For each $\theta_{-i,t}$ ($t \leq n$) and $a_{-i}, a_{-i}^{\prime} \in A_{-i}$, if $(\theta_{-i,t}, a_{-i}), (\theta_{-i,t}, a_{-i}^{\prime}) \in \Psi_{\theta_{-i,t},\Delta^{\kappa}}^{n}$, $\mu^i(\theta_{-i,t}, a_{-i})=\mu^i(\theta_{-i,t}, a_{-i}^{\prime})$.
\end{itemize}

\end{enumerate}
 We let $\Psi_{-i,\Delta^{\kappa}}^{n} := \cup_{\theta_{-i} \in \Theta_{-i}}\Psi_{\theta_{-i},\Delta^{\kappa}}^{n}$ and $\Psi_{i,\Delta^{\kappa}}^{\infty} =\cap_{n\geq 0}\Psi_{i,\Delta^{\kappa}}^{n} $ for each $i \in I$. $\Psi_{\Delta^{\kappa}}^{\infty}:=\times_{i \in I} \Psi_{i,\Delta^{\kappa}}^{\infty}$ is called the \textbf{\emph{CH-solution}}.
 \end{definition}

Definition \ref{def:chp} defines an authentic ``one-by-one'' procedure. In the sequence $(\Psi_{1, \Delta^{\kappa}}^n, \Psi_{2, \Delta^{\kappa}}^n)_{n \in \mathbb{N}_0}$, a  player with type $\theta_{ik}$ adjusts her belief and choice after all players with smaller indices of level have finished adjusting theirs: her type-action pairs stay unaltered at each step $t$ with $t <k$. In other words, here, the subscript $k$ in the name of type $\theta_{ik}$ indicates at which step the computation of best response has to be done. For instance, in Example \ref{ex:beau}, at step 1, we obtain that for level-1 type, the best response is $0$; then, at step 2, for level-2 type, by considering the choices of level-$0$ and level-$1$ types, we obtain $0$ as the best response, etc. The procedure provides only an algorithm and has no epistemic foundation; one might want to adopt an informal interpretation where a level-$k$ player first puts herself into the shoes of the others and simulates the behavior of her (imagined) opponents of lower levels, and, based on the simulation, she determines her own choice. With respect to this interpretation, the subscription $k$ in $\theta_{ik}$ could be regarded as literally indicating the strategic sophistication.



. 
 
 The significant difference between CH- and $\Delta^{\kappa}$-procedure is in Condition 2.2 in Definition \ref{def:chp}. There, it is required that if under one type $\theta_{-i,t}$, several actions are deemed optimal, then $\theta_{it}$ with $t > k$ should assign those actions with equal chance conditional on $\theta_{-i,t}$.\footnote{In the literature behavioral economics (Camerer et al. \cite{chc04}), the uniform distribution is not regarded as essential. Yet $\mu^i(\cdot|\theta_{-i,t})$ is always required to take a specified numerical form.} This is not assumed in $\Delta^{\kappa}$-procedure. Hence, compared to CH-procedure, each type $\theta_{ik}$  ($k \geq 1$) in $\Delta^{\kappa}$-procedure allows more flexible beliefs, which causes the two procedures generate different outcomes in some cases.
 


 \begin{example}\label{tab:dif}
 Consider the game in Table \ref{TAB1}. 
 
  \begin{table}[ht!]
 \centering
\begin{tabular}{llll}
\hline
$1\setminus 2$ & $c$  & $d$ & $e$ \\ \hline
 $a$ & $1, 1$ & $2, 2$ & $8, -1$ \\
 $b$ & $-4, 2$ & $3, 1$ & $0, -1$ \\ \hline
\end{tabular}
\caption{Example \ref{tab:dif}}
\label{TAB1}
\end{table}

Let $f \in \Delta^{o}(\mathbb{N}_0)$. One can see that for both CH- and $\Delta^{\kappa}$-procedures, at step 1, $a$ and $b$ survives under $\theta_{11}$ and $\theta_{12}$ and $c$ and $d$ survives under $\theta_{21}$ and $\theta_{22}$. The problem is at step 2. In the CH-procedure, since it requires that the allowable belief of $\theta_{12}$ assigns $\frac{1}{2}$ to $c$ and $d$, only $a$ survives. In contrast, since the $\Delta^{\kappa}$-procedure does not have this restriction, both $a$ and $b$ could survive since $b$ is optimal to a belief $\mu^1$ with $\mu^1(d|\theta_{21}) = 1$. 
 \end{example}
 
 From Example \ref{tab:dif}, one can see that the problem is caused by the tie: once at some step there are several best responses, the distribution on those actions might lead the two procedures to generate different results. Nevertheless, this happens only on a null set of games, as the following result shows.

  
 \begin{theorem}\label{coinc}
 
 \begin{enumerate}
 \item For each $i \in I$ and $k \in \mathbb{N}_0$, $\Psi_{\theta_{ik},\Delta^{\kappa}}^{k} \subseteq \Sigma_{\theta_{ik},\Delta^{\kappa}}^{k}$; consequently, $\Psi_{i,\Delta^{\kappa}}^{\infty} \subseteq \Sigma_{i,\Delta^{\kappa}}^{\infty}$.
 
 \item For generic games, for each $i \in I$ and $k \in \mathbb{N}_0$, $\Psi_{\theta_{ik},\Delta^{\kappa}}^{k} = \Sigma_{\theta_{ik},\Delta^{\kappa}}^{k}$; consequently, $\Psi_{i,\Delta^{\kappa}}^{\infty} = \Sigma_{i,\Delta^{\kappa}}^{\infty}$.
 \end{enumerate}
 \end{theorem}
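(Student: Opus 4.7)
The plan is to prove both parts by induction on $k$, using Proposition \ref{coj:lk} to bridge the ``one-type-at-a-time'' CH schedule and the simultaneous $\Sigma^n$ layers of the $\Delta^\kappa$-rationalization procedure.

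For Part~1, the base case $k=0$ is immediate since both sets coincide with $\{\theta_{i0}\} \times A_i$. For the inductive step, assume $\Psi^{k'}_{\theta_{-i,k'},\Delta^\kappa} \subseteq \Sigma^{k'}_{\theta_{-i,k'},\Delta^\kappa}$ for all $k' < k$, and pick $(\theta_{ik}, a_i) \in \Psi^k_{\theta_{ik},\Delta^\kappa}$ witnessed by a belief $\mu^i$ satisfying K1--K3 with $\supp \mu^i = \cup_{t=0}^{k-1} \Psi^{k-1}_{\theta_{-i,t},\Delta^\kappa}$. Since the CH procedure freezes each type $\theta_{-i,t}$ after step $t$, this equals $\cup_{t=0}^{k-1} \Psi^{t}_{\theta_{-i,t},\Delta^\kappa}$. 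The inductive hypothesis together with Proposition \ref{coj:lk} (which gives $\Sigma^{t}_{\theta_{-i,t},\Delta^\kappa} = \Sigma^{k-1}_{\theta_{-i,t},\Delta^\kappa}$ for $t<k$) yields $\supp \mu^i \subseteq \Sigma^{k-1}_{-i,\Delta^\kappa}$, so the same $\mu^i$ meets the definitional requirements of the $\Delta^\kappa$-rationalization procedure at step $k$. The prerequisite $(\theta_{ik},a_i) \in \Sigma^{n}_{\theta_{ik},\Delta^\kappa}$ for $n \leq k$ is handled by a short inner induction on $n$: the same $\mu^i$ continues to witness membership at every such stage because $\supp \mu^i \subseteq \Sigma^{k-1}_{-i,\Delta^\kappa} \subseteq \Sigma^{n-1}_{-i,\Delta^\kappa}$ for all $n \leq k$.

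For Part~2, \emph{generic} is interpreted as excluding the lower-dimensional locus of payoff profiles where a best-response tie occurs at any stage of the procedure. Off this null set, uniqueness of best responses propagates inductively to make each $\Sigma^{t}_{\theta_{-i,t},\Delta^\kappa}$ a singleton $\{(\theta_{-i,t}, a_{-i}^\ast(t))\}$ for every $t \geq 1$. Hence any $\mu^i \in \Delta^{\theta_{ik}}$ with $\supp\mu^i \subseteq \Sigma^{k-1}_{-i,\Delta^\kappa}$ must place the conditional $\mu^i(\cdot \mid \theta_{-i,t})$ on the point mass at $a_{-i}^\ast(t)$ for $t \geq 1$; K2 fixes the conditional at $t=0$ as uniform on $A_{-i}$; and K3 fixes the marginal on $\Theta_{-i}$ to $f^k$. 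Thus $\mu^i$ coincides with the unique belief used at step $k$ of the CH-procedure, proving $\Sigma^{k}_{\theta_{ik},\Delta^\kappa} \subseteq \Psi^{k}_{\theta_{ik},\Delta^\kappa}$. Combined with Part~1, this yields equality, and intersecting over $k$ gives $\Psi^{\infty}_{i,\Delta^\kappa} = \Sigma^{\infty}_{i,\Delta^\kappa}$.

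The main technical care lies in the index bookkeeping between the two procedures: Proposition \ref{coj:lk} is essential in transporting the inductive hypothesis (about type $\theta_{-i,t}$ at \emph{its own} step $t$ in the $\Sigma$-procedure) up to the $\Sigma^{k-1}$ layer relevant to the step-$k$ belief. Once this alignment is in place, the restrictions K1--K3 together with the singleton supports in generic games force the two procedures' admissible beliefs to collapse to one and the same distribution, making the equality essentially automatic.
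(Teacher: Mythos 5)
Your proof is correct and follows essentially the same route as the paper's: an induction that aligns the CH-procedure's ``frozen'' sets $\Psi^{k-1}_{\theta_{-i,t},\Delta^\kappa}=\Psi^{t}_{\theta_{-i,t},\Delta^\kappa}$ with the stabilized layers $\Sigma^{t}_{\theta_{-i,t},\Delta^\kappa}=\Sigma^{k-1}_{\theta_{-i,t},\Delta^\kappa}$ from Proposition~\ref{coj:lk}, so that the CH witness belief also certifies membership at every stage $n\leq k$ of the $\Delta^\kappa$-procedure, plus the observation that genericity (no best-response ties) collapses each $\Sigma^{t}_{\theta_{-i,t},\Delta^\kappa}$ to a singleton. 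Your Part~2 is in fact slightly more explicit than the paper's, which leaves the forced coincidence of the admissible beliefs under K2--K3 as an easy consequence.
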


\begin{proof}[Proof of Theorem \ref{coinc}]
1. First, it is easy to see that $\Psi_{\theta_{ik},\Delta^{\kappa}}^{k} = \Sigma_{\theta_{ik},\Delta^{\kappa}}^{k}$ for $n = 0$ and $1$. For $k \geq 2$, we show that $\Psi_{\theta_{ik},\Delta^{\kappa}}^{k} \subseteq \Sigma_{\theta_{ik},\Delta^{\kappa}}^{k}$. Let $(\theta_{i2}, a_{i}) \in \Psi_{\theta_{i2},\Delta^{\kappa}}^{2}$.  By definition, it means that there is a belief $\overline{\mu}^i \in \Delta^{\theta_i2}$ with $\overline{\mu}^i(\Psi_{-i,\Delta^{\kappa}}^{2}) = 1$ such that $a_i$ is a best response to $\overline{\mu}^i$ for $\theta_{i2}$. Since $\Psi_{\theta_{ik},\Delta^{\kappa}}^{k} = \Sigma_{\theta_{ik},\Delta^{\kappa}}^{k}$ for $k = 0,1$, we first can see that $(\theta_{i2}, a_{i}) \in \Sigma_{i,\Delta^{\kappa}}^{1}$. Indeed, because $\Sigma_{-i,\Delta^{\kappa}}^{0} = \Psi_{-i,\Delta^{\kappa}}^{0}$, it holds that $\overline{\mu}^i(\Sigma_{-i,\Delta^{\kappa}}^{2}) = 1$, and consequently it follows from Definition \ref{def2} that $(\theta_{i2}, a_{i}) \in \Sigma_{i,\Delta^{\kappa}}^{1}$. Second, based on this, we can see that $(\theta_{i2}, a_{i}) \in \Sigma_{i,\Delta^{\kappa}}^{2}$. Indeed, because $\Psi_{\theta_{ik},\Delta^{\kappa}}^{1} = \Sigma_{\theta_{ik},\Delta^{\kappa}}^{1}$ for $k = 0,1$ and $\supp\overline{\mu}^i = \{\theta_{-i,0}, \theta_{-i,1}\}$, it implies that $\overline{\mu}^i(\Sigma_{-i,\Delta^{\kappa}}^{1}) = 1$, and consequently $(\theta_{i2}, a_{i}) \in \Sigma_{\theta_{i2},\Delta^{\kappa}}^{2}$. This argument can be generalized: for each $(\theta_{ik}, a_{i}) \in \Psi_{\theta_{ik},\Delta^{\kappa}}^{k}$, we can show by induction that $(\theta_{ik}, a_{i}) \in \Sigma_{\theta_{ik},\Delta^{\kappa}}^{t}$ for $t = 0,...,k$ because for the belief $\overline{\mu}^i \in \Delta^{\theta_{ik}}$ to which $a_i$ is a best response for $\theta_{ik}$, $\overline{\mu}^i(\Sigma_{-i,\Delta^{\kappa}}^{t}) = 1$ for $t = 0,...,k-1$. Here we have shown that $\Psi_{\theta_{ik},\Delta^{\kappa}}^{k} \subseteq \Sigma_{\theta_{ik},\Delta^{\kappa}}^{k}$ for each $k \in \mathbb{N}_0$. 

2. It is straightforward to see that in generic case, for each $k \geq 1$, $|\Sigma_{\theta_{ik},\Delta^{\kappa}}^{k}| =1$.This is straightforward to see. Let $\epsilon >0$. Suppose that at step $k$ ($k \geq 1$) there are multiple best responses for $\theta_{ik}$, we can slightly adjust the game by adding an $\frac{\epsilon}{2^(k+1)}$ to some payoff of one action among them; then we select the unique action. Finally we obtain a game whose distance with the original game with respect to payoff (i.e., $\sum_{a \in A, i \in I}|u_i(a)-u_i^{\prime}(a)|$) is less than $\epsilon$ and at each step $k$ only one action for $\theta_{ij}$. By 1, it is easy to see that for each $i \in I$ and $k \in \mathbb{N}_0$, $\Psi_{\theta_{ik},\Delta^{\kappa}}^{k} = \Sigma_{\theta_{ik},\Delta^{\kappa}}^{k}$.
\end{proof}

Theorem \ref{coinc} shows that, if we focus on the CH solution and take the CH-procedure as an algorithm to compute the outcome based on an intuitive and informal assumption about the reasoning process, we could say that we have provided an epistemic foundation for \emph{the CH solution} in generic games. If we do not satisfy with only the coincidences of the outcomes, but, by taking both procedures as a literal description of  how the players carry out their reasoning, we want to know which procedure is ``correct'', we might have to appeal to some experimental test; also, in that case, we might need a substantial epistemic foundation for the $\Psi$-procedure. 

\begin{remark}
One might wonder whether we can improve the result in Theorem \ref{coinc} by eliminating ``generic'' in the statement via adding some other epistemic conditions. Theoretically, it is possible. Note that all restrictions in $\Delta^\kappa$ is on the first-order belief (that is, beliefs about the opponent's types and actions), while to ensure that each type assigning a uniform distribution over ``acceptable'' actions conditional on each type of the opponent (if multiple actions are allowable), we need conditions on \emph{every} order of beliefs. So far, in the literature on EGT, most works are developed on first-order restrictions; although some---for example, Perea \cite{pa2011}, Friedenberg \cite{fa19}, and Battigalli and Catonini \cite{bc23}---investigate conditions on higher-order beliefs; further research is need in the direction.
\end{remark}

\begin{remark}
The same structure could be used to study other solution concepts in the CH literature, for example, the $\lambda$-NLK equilibrium developed in Levin and Zhang \cite{lz22}. There, we can modify K1 by including $\theta_{-i,k}$ into the support of each $\mu^i \in \Delta^{\theta_{ik}}$, and add another condition requiring that $\mu^i(\theta_{-i,k}) = \lambda$. However, in that case, as it is common in the EGT literature, even in generic games, the equilibria form only a proper subset of $\Delta^{\kappa}$-rationalizable outcomes.
\end{remark}

\subsection{$\Delta^{\kappa}$-Rationalizability and Bayesian equilibrium}\label{deltab}

As already pointed out in Camerer et al. \cite{chc04}, the CH model is non-equilibrium. Since then, researchers tried to connect it with some equilibrium (for example, Strzalecki \cite{st14}, Koriyama and Ozkes \cite{ko21}, Levin and Zhang \cite{lz22}). All need modifications or compromises to fulfill some fixed-point property, for example, to assume that each level-$k$ player could believe that there are other level-$k$ players. Here, based on the previous results, we can directly connect CH solution with Bayesian equilibrium.

A \emph{Bayesian game} is a structure $BG = \langle I, \Omega, (\Theta_j, T_j, A_j, \tau_j, \vartheta_j, p_j, u_j)_{j \in I}\rangle$,\footnote{To our best knowledge, residual uncertainty has never been considered in the CH literature. Hence we omit it here.} where
\begin{itemize}
\item $(I, (\Theta_i, A_i, u_i)_{i \in I})$ is a game with payoff uncertainty.

\item $\Omega$ is a set of \emph{states of the world}.

\item For each $i \in I$, $T_i$ is the set of types of $i$ \`{a} la Harsanyi, $\tau_i: \Omega \rightarrow T_i$, and $\vartheta_i: T_i \rightarrow \Theta_i$.

\item For each $i \in I$, $p_i \in \Delta(\Omega)$ is player $i$'s prior (subjective) probability measure.

\end{itemize}

When the game with payoff uncertainty is given, we say $BG$ defined above is a \emph{Bayesian elaboration} of it. A \emph{Bayesian equilibrium} is a profile of decision rules $(\sigma_j: T_j \rightarrow A_j)_{j \in I}$ such that for each $i \in I$ and $t_i \in T_i$,
\begin{equation*}
\sigma_i(t_i) \in \marg \max_{a_i \in A_i}\sum_{\omega \in \Omega} p(\omega|t_i)u_i(\vartheta_i(t_i), \vartheta_{-i}(\tau_{-i}(\omega)), a_{i}, \sigma_{-i}(\tau_{-i}(\omega)))
\end{equation*}

Battigalli and Siniscalchi \cite{bs03} proved the following result.
\begin{lemma}\label{prop:bs}
Fix a profile $\Delta= (\Delta^{\theta_i})_{i \in I, \theta_i \in \Theta_i}$ of restrictions on exogenous beliefs. A profile $(\theta_i, a_i)_{i \in I}$ is $\Delta$-rationalizable in the game with payoff uncertainty $\hat{G}$ if and only if there is a Bayesian game $BG$ elaboration of $\hat{G}$ that yields the restrictions on exogenous beliefs $\Delta$, an equilibrium $\sigma$ of $BG$, and a state of the world $\omega$ in BG such that $(\theta_i, a_i)_{i \in I} = (\vartheta_i(\tau_i(\omega)), \sigma_i(\tau_i(\omega)))_{i \in I}$.
\end{lemma}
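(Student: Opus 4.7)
The plan is to prove the two directions separately, exploiting the duality between the iterative character of $\Delta$-rationalizability and the fixed-point character of Bayesian equilibrium. Note that this is a standard result in the Battigalli--Siniscalchi framework, so the argument will essentially recall their construction.

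For the $(\Leftarrow)$ direction I would argue by induction on the step $n$ of the $\Delta$-rationalization procedure. Given a Bayesian elaboration $BG$ yielding $\Delta$, an equilibrium $\sigma$, and a state $\omega$, I would first push the conditional prior $p_i(\cdot \mid t_i)$ forward through the map $\omega' \mapsto (\vartheta_{-i}(\tau_{-i}(\omega')), \sigma_{-i}(\tau_{-i}(\omega')))$ to obtain, for every type $t_i$, a first-order belief $\mu^i_{t_i} \in \Delta(\Theta_{-i} \times A_{-i})$. By hypothesis that $BG$ yields $\Delta$, each $\mu^i_{t_i}$ lies in $\Delta^{\vartheta_i(t_i)}$, and by the equilibrium condition $\sigma_i(t_i)$ is a best response to $\mu^i_{t_i}$ under $\vartheta_i(t_i)$. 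The inductive step then says: if $(\vartheta_{-i}(t_{-i}), \sigma_{-i}(t_{-i})) \in \Sigma^n_{-i,\Delta}$ for all $t_{-i}$, the support of $\mu^i_{t_i}$ lies in $\Sigma^n_{-i,\Delta}$, which together with the best-response property places $(\vartheta_i(t_i), \sigma_i(t_i))$ in $\Sigma^{n+1}_{i,\Delta}$. Specializing to $t_i = \tau_i(\omega)$ gives the claim.

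For the $(\Rightarrow)$ direction I would build a canonical Bayesian elaboration out of the $\Delta$-rationalizable set itself. Take $T_i := \Sigma^\infty_{i,\Delta}$, with $\vartheta_i(\theta_i,a_i) := \theta_i$, and let $\Omega := T_1 \times T_2$ with $\tau_i$ the projection. For each $t_i = (\theta_i,a_i) \in T_i$, invoke the definition of $\Sigma^\infty_{i,\Delta}$ to select a belief $\mu^i_{t_i} \in \Delta^{\theta_i}$ concentrated on $\Sigma^\infty_{-i,\Delta} = T_{-i}$ against which $a_i$ is a best response under $\theta_i$. Define the decision rule $\sigma_i(\theta_i,a_i) := a_i$ and take $p_i \in \Delta(\Omega)$ to be any probability whose $T_i$-marginal is strictly positive (e.g.\ uniform) and whose conditional on each $t_i$ on $T_{-i}$ agrees with $\mu^i_{t_i}$ (which makes sense since $T_{-i}$ just is $\Sigma^\infty_{-i,\Delta} \subseteq \Theta_{-i} \times A_{-i}$, and $(\vartheta_{-i},\sigma_{-i})$ is the identity on $T_{-i}$). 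By construction the pushforward of $p_i(\cdot \mid t_i)$ equals $\mu^i_{t_i} \in \Delta^{\theta_i}$, so $BG$ yields $\Delta$; and $\sigma_i(t_i) = a_i$ is optimal against $\mu^i_{t_i}$, so $\sigma$ is an equilibrium. Given the original profile $(\theta_i, a_i)_{i \in I}$, the state $\omega := ((\theta_1,a_1),(\theta_2,a_2)) \in \Omega$ recovers it as $(\vartheta_i(\tau_i(\omega)), \sigma_i(\tau_i(\omega)))_{i \in I}$.

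The main obstacle is the $(\Rightarrow)$ construction: one must be careful that the elaboration genuinely yields $\Delta$ at \emph{every} type of the Bayesian game, not merely at the selected types recovering the target profile. This is exactly why one uses the full set $\Sigma^\infty_{i,\Delta}$ as $T_i$, rather than only the types consistent with the fixed profile. The use of heterogeneous priors $p_i$ rather than a common prior is also essential, because the selected beliefs $\mu^i_{t_i}$ across players need not be derivable from any single common prior; allowing each $p_i$ to be chosen independently is what makes the conditionals match the $\mu^i_{t_i}$ without further restrictions.
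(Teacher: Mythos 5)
The paper does not actually prove this lemma: it is imported from Battigalli and Siniscalchi \cite{bs03} (``Battigalli and Siniscalchi proved the following result''), so there is no in-paper argument to compare yours against. That said, your reconstruction is correct and is essentially the standard argument from \cite{bs03}: for $(\Leftarrow)$, push the conditional priors forward to first-order beliefs and show by induction that equilibrium type--action pairs survive every round; for $(\Rightarrow)$, use the whole $\Delta$-rationalizable set as the type space of a canonical elaboration with heterogeneous priors whose conditionals are the justifying beliefs. The one step you gloss over is the selection, for each $(\theta_i,a_i)\in\Sigma^\infty_{i,\Delta}$, of a \emph{single} belief in $\Delta^{\theta_i}$ concentrated on $\Sigma^\infty_{-i,\Delta}$: the definition of $\Sigma^\infty_{i,\Delta}$ as an intersection only hands you, for each $n$, a belief concentrated on $\Sigma^n_{-i,\Delta}$, and passing to the limit requires the fixed-point property that the procedure stabilizes (immediate when $\Theta_i\times A_i$ is finite, but in this paper $\Theta_i$ is countably infinite, so one needs the type-by-type stabilization of Proposition \ref{coj:lk} or an analogous argument for general $\Delta$). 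Your closing observations---that the elaboration must yield $\Delta$ at \emph{every} type, which is why $T_i$ must be all of $\Sigma^\infty_{i,\Delta}$, and that heterogeneous (non-common) priors are indispensable---are exactly the right points of care, and the second is echoed by the paper's own footnote to Proposition \ref{pro:bay} noting that the constructed Bayesian game cannot come from an Aumann model with a common prior.
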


Following Propositions \ref{coj:lk}, Lemma \ref{prop:bs}, and Theorem \ref{coinc}, we already can see that in generic games (i.e., where for each level only one action is optimal),  each type-action profile is a CH solution if and only if it is yielded by a Bayesian equilibrium of a Bayesian elaboration satisfying the restriction $\Delta^\kappa$; note that here, satisfying $\Delta^\kappa$ implies that the probability generated from $p_i$ on $T_{-i}$ should coincide with $f$. Actually, if we consider mixed-action Bayesian equilibrium, we could extend this statement by showing that in \emph{all} games, a CH solution is a Bayesian equilibrium. A \emph{mixed-action Bayesian equilibrium} is a profile of decision rules $(\sigma_j: T_j \rightarrow \Delta(A_j))_{j \in I}$ such that for each $i \in I$, $t_i \in T_i$,
\begin{equation*}
\supp\sigma_i(\cdot) \subseteq \marg \max_{a_i \in A_i}\sum_{\omega \in \Omega} p(\omega|t_i)u_i(\vartheta_i(t_i), \vartheta_{-i}(\tau_{-i}(\omega)), a_{i}, \sigma_{-i}(\tau_{-i}(\omega)))
\end{equation*}
Note that here, $\sigma_{-i}(\tau_{-i}(\omega))$ is a probability distribution over $A_{-i}$. We abused the symbols a little bit here and use $u_i$ to denote the expected utility with respect to $\sigma_{-i}(\tau_{-i}(\omega))$. We have the following result.

\begin{proposition}\label{pro:bay}
Given a game $G$, if a profile $(\theta_i, \alpha_i)_{i \in I}$ with $\alpha_i \subseteq A_i$ ($i \in I$) satisfies $\alpha_i = \Psi_{\theta_i, \Delta^\kappa}^{\infty}$ for each $i \in I$, then there is a Bayesian game $BG$ elaboration of $\hat{G}$ (that is, $G$ appended with level-types as defined above) that yields the restrictions on exogenous beliefs $\Delta^\kappa$, a (mixed-action) equilibrium $\sigma$ of $BG$, and a state of the world $\omega$ in BG such that $(\theta_i, \overline{\alpha}_i)_{i \in I} = (\vartheta_i(\tau_i(\omega)), \sigma_i(\tau_i(\omega)))_{i \in I}$, where $\overline{\alpha}_i$ is the uniform distribution over $\alpha_i$.
\end{proposition}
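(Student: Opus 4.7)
The plan is to build the Bayesian elaboration by hand and exhibit the uniform strategy profile as a mixed-action equilibrium; the desired state of the world will then be the one whose type-projection equals the given profile $\theta=(\theta_1,\theta_2)$. I cannot simply invoke Lemma \ref{prop:bs}, since that lemma delivers only a pure action at the relevant state, whereas here I need to hit the \emph{uniform distribution} on $\alpha_i$ for each $i$ at that state.

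I set $T_i=\Theta_i$ with $\vartheta_i$ the identity map, $\Omega=T_1\times T_2$, and $\tau_i$ the $i$-th projection, so each Harsanyi type carries exactly one level-type. For the prior of player $i$, I pick any full-support probability $q_i$ on $\Theta_i$ and, for $k\ge 1$, define
\[
p_i(\theta_{ik},\theta_{-i,t}) \;=\; q_i(\theta_{ik})\,\mathbf{1}\{t<k\}\,f^k(t),
\]
while for $k=0$ I spread $q_i(\theta_{i0})$ arbitrarily (say uniformly) over $T_{-i}$; this is harmless because the $\theta_{i0}$-type's payoff is constant. The conditional $p_i(\cdot\mid\theta_{ik})$ then has marginal $f^k$ on $\{0,\dots,k-1\}$ and assigns probability $0$ to higher levels, so $BG$ yields the exogenous restriction $\Delta^\kappa$.

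The candidate equilibrium is $\sigma_i(\theta_{ik}):=\overline{\alpha}_i$, the uniform mixed action on $\alpha_i=\Psi_{\theta_{ik},\Delta^\kappa}^\infty$ (for $\theta_{i0}$ this is uniform on $A_i$, trivially optimal and consistent with K2). The core verification is that for every $k\ge 1$ and every $a_i\in\alpha_i$, $a_i$ maximises expected payoff against the belief on $T_{-i}\times A_{-i}$ induced by $p_i(\cdot\mid\theta_{ik})$ and $\sigma_{-i}$. But that induced belief assigns mass $f^k(t)$ to each $\theta_{-i,t}$ with $t<k$ and, conditional on $\theta_{-i,t}$, places uniform mass on $\Psi_{\theta_{-i,t},\Delta^\kappa}^\infty$---precisely the class of beliefs specified in step $k$ of the CH-procedure (K1, K3, together with 2.1--2.2 of Definition \ref{def:chp}). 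By the very definition of $\Psi_{\theta_{ik},\Delta^\kappa}^k=\Psi_{\theta_{ik},\Delta^\kappa}^\infty=\alpha_i$, every $a_i\in\alpha_i$ is a best response, so $\supp\sigma_i(\theta_{ik})\subseteq\alpha_i$ satisfies the mixed-action best-response condition. Taking $\omega=(\theta_1,\theta_2)$ then yields $(\vartheta_i(\tau_i(\omega)),\sigma_i(\tau_i(\omega)))_{i\in I}=(\theta_i,\overline{\alpha}_i)_{i\in I}$.

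The main obstacle is aligning the CH-procedure's tie-breaking rule (2.2) with the uniform mixed action of the equilibrium. A clean induction on $k$ handles this: once $\sigma_{-i}(\theta_{-i,t})$ is known to be uniform over $\Psi_{\theta_{-i,t},\Delta^\kappa}^\infty$ for all $t<k$---which is guaranteed because in the CH-procedure each type stabilises by its own step---the belief of $\theta_{ik}$ induced by $p_i(\cdot\mid\theta_{ik})$ and $\sigma_{-i}$ is exactly the one used to compute $\Psi_{\theta_{ik},\Delta^\kappa}^k$. Notably, no genericity is needed here, because the move from pure to mixed actions absorbs the tie-breaking degeneracies that forced the ``generic'' qualifier in Theorem \ref{coinc}.
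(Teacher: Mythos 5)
Your proposal is correct and follows essentially the same route as the paper: an explicit Bayesian elaboration whose state space is the product of level-type indices, with each player's prior conditional on her own level-$k$ type placing mass $f^k(t)$ on lower types $t<k$, and the uniform mixture over $\Psi^{\infty}_{\theta_{ik},\Delta^\kappa}$ exhibited as a mixed-action equilibrium. Your version differs only in bookkeeping (taking $T_i=\Theta_i$ and an arbitrary full-support marginal $q_i$ in place of the paper's $\omega_{mn}$ grid with the $\epsilon$ mass at $\omega_{00}$), and your inductive verification that the induced belief is exactly the unique CH-procedure belief is in fact spelled out in more detail than in the paper.
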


\begin{proof}
We show how to construct such a Bayesian game and the corresponding Bayesian equilibrium. For simplicity, we consider only 2-person games. Fix $\hat{G} = \langle \{1,2\}, (A_i, \Theta_i, u_i: \Theta \times A \rightarrow \mathbb{R})_{i = 1,2}\rangle$ and a probability measure $f$ on $\mathbb{N}_0$. We construct a Bayesian game $BG$ by defining
\begin{itemize}

\item $\Omega = \{\omega_{mn}: m,n \in \mathbb{N}_0\}$.

\item For each $i \in \{1,2\}$, $T_i = \{t_{i0}, t_{i1},...\}$, $\vartheta_i(t_{ik}) = \theta_{ik}$, and for each $\omega_{mn} \in \Omega$, $\tau_1(\omega_{mn}) = t_{1m}$ and $\tau_2(\omega_{mn}) = t_{2n}$.

\item For each $\omega_{mn} \in \Omega$,
 \begin{equation*}
p_1(\omega_{mn}) = 
  \begin{cases}
      0 & \text{if $m \leq n$, $m,n>0$}\\
      \epsilon & \text{if $m=n=0$}\\
      (1-\epsilon)f(m)f^m(n) & \text{otherwise}
 \end{cases}    
 \end{equation*}
\begin{equation*}
p_2(\omega_{mn}) = 
  \begin{cases}
      0 & \text{if $n \leq m$, $m,n>0$}\\
      \epsilon & \text{if $m=n=0$}\\
      (1-\epsilon)f(n)f^n(m)& \text{otherwise}
 \end{cases}    
 \end{equation*}
\end{itemize}

\begin{figure}[h!] 
\centering
  \includegraphics[width=0.6\columnwidth]{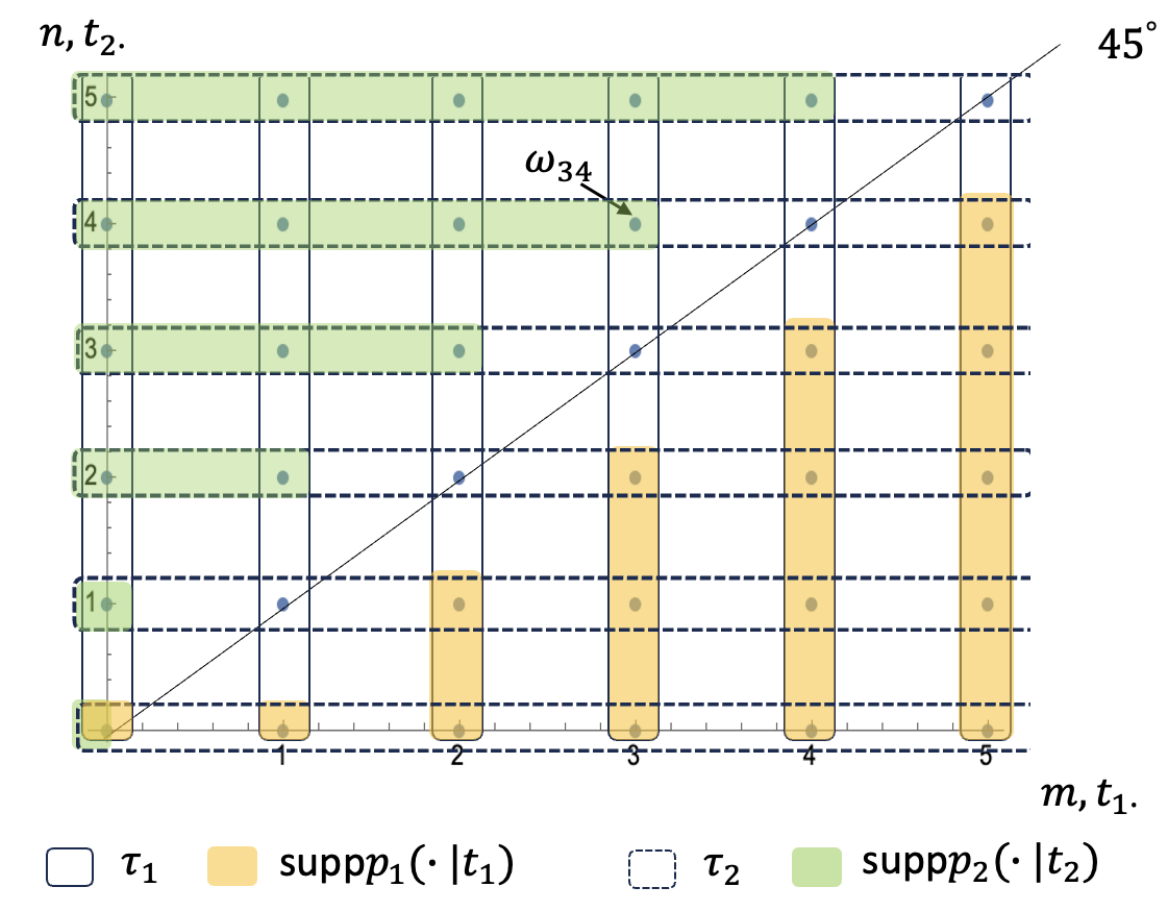}
  \caption{The information structure of the Bayesian game}
  \label{fig:UN1}
\end{figure}

The information structure of the Bayesian game is shown in Figure \ref{fig:UN1}, where each point is an element in $\Omega$ (for example, the point having $3$ with respect to $t_1$-axis and $4$ to $t_2$-axis represents $\omega_{34}$) and the vertical (horizontal) slots indicate $\tau_1$ ($\tau_2$). One can see that a profile $(\theta_{1m}, \alpha_1; \theta_{2n}, \alpha_2)$ satisfies $\alpha_1 = \Psi_{\theta_1m, \Delta^\kappa}^{\infty}$ and $\alpha_2 = \Psi_{\theta_2n, \Delta^\kappa}^{\infty}$  if and only if it is generated from a mixed-action Bayesian equilibrium $(\sigma_1, \sigma_2)$ of the above Bayesian game; especially, when $|\alpha_i| >1$, $\sigma_i$ is a uniform distribution over $\alpha_i$.
\end{proof}

For instance, in Example \ref{ex:beau}, the state of the world could be $\omega_{5, 7}$, which means that player 1 is level-$5$ and player 2 is level-$7$ and both chooses $0$, which forms an equilibrium, even the real level of player 2 is not in the support of player 1's belief.\footnote{One can see that this model cannot be derived from an Aumann model \cite{ra76} of asymmetric information, that is, each $p_i (i \in I)$ cannot be derived from a common prior $p \in \Delta(\Omega)$ and an information partition (see Chapter 8 of Battigalli, Catonini, and De Vito \cite{betal20} and Chapter 9 of Maschler, Solan, and Zamir \cite{msz20}). Indeed, for a Bayesian game derived from an Aumann model, for each $\omega \in \Omega$, if $p_i(\omega|\tau_i(\omega)) >0$ for some $i \in I$ then $p_j(\omega|\tau_j(\omega)) >0$ for all $j \in I$. However, here, for example $p_1(\omega_{21}|t_2) > 0$ but $p_2(\omega_{21}|t_1) = 0$.
}

\section{$\Delta^{\kappa}$-rationalizability in dynamic games and DCH solution}\label{sec:dyn}

To extend our discussion into the dynamic situation, we focus on multistage games with perfect information, which is predominantly used in the experimental literature. Still, we focus on  $2$-person games. A \emph{multistage game} is a tuple $\Gamma =\langle I, (A_j, \mathcal{A}_j, v_j)_{j \in I} \rangle$, where
\begin{itemize}
\item $A_i$ is the set of potentially feasible actions for each $i \in I$,  

\item Let $A = \times_{j \in I} A_j$ and $A^{< \mathbb{N}_0} = \cup_{k\in \mathbb{N}_0} A^k$ be the set of finite sequences of action profiles;  for each $i \in I$, $\mathcal{A}_i: A^{< \mathbb{N}_0} \rightrightarrows  A_i$ is a feasibility correspondence that assigns each sequence of action profiles (i.e., \emph{history}) $h = (a^t)_{t = 1}^{\ell}$ actions available to player $i$; the game terminates when $\mathcal{A}_i(h) = \emptyset$ for each $i$,

\item For each terminal history $h$ (i.e, $\mathcal{A}_i(h) = \emptyset$ for each $i$), $v_i(h)$ is the payoff of $i$.
\end{itemize}

Note that in the second bullet above, $A^0 = \{\varnothing\}$ is a singleton containing only the empty sequence $\varnothing$, which indicates the beginning of the game.\footnote{Attention: we use $\emptyset$ to denote the set-theoretical empty set, and use $\varnothing$ to denote the root of a game tree.}  A history $(a^1, ...,a^\ell) \in A^{< \mathbb{N}_0}$ is called \emph{feasible} iff (i) $a^1 \in \times_{i \in I}\mathcal{A}_i(\varnothing)$ and (ii) $a^{t+1} \in \times_{i \in I}\mathcal{A}_i(a^t)$ for each $t = 1,..., \ell-1$. We use $\overline{\mathcal{H}}$ to denote the set of all feasible histories, $\mathcal{Z}$ and $\mathcal{H}$ the set of terminal histories and the set of non-terminal histories in $\overline{\mathcal{H}}$, respectively. $\overline{\mathcal{H}}$ is naturally endowed with the prefix order $\preceq$.\footnote{That is, for $h=(a^1, ...,a^\ell)$, $h^{\prime}=(b^1,...,b^{\ell^{\prime}}) \in \overline{\mathcal{H}}$, $h$ is a \emph{proper prefix} of $h^\prime$, denoted by $h \prec h^{\prime}$, iff $\ell < \ell^\prime$ and $a^t = b^t$ for $t = 1,..., \ell$.  We call $h$ a \emph{prefix} of $h^\prime$, denoted by $h \preceq h^{\prime}$, iff either $h \prec h^{\prime}$ or $h = h^{\prime}$. } At each non-terminal history $h$, a player $i$ is called \emph{active} iff she has multiple available actions, i.e., $|\mathcal{A}_i(h)|>1$.\footnote{To ease the symbols in the definitions, we stipulate that at each history, an inactive player also has an action, namely ``wait''. In the following, when no confusion is caused, we omit ``wait'' from descriptions of strategies.} $\Gamma$ is a game with \emph{perfect information} iff at each $h \in \mathcal{H}$ only one player is active. 

 A \emph{strategy} for player $i$ is a function $s_i : \mathcal{H} \rightarrow A_i$ such that for each $h \in \mathcal{H}$, $s_i(h) \in \mathcal{A}_i(h)$. We let $S_i$ be the set of player $i$'s strategies and $S = S_1 \times S_2$. We define $\zeta: S \rightarrow \mathcal{Z}$ to be the \emph{path function} associating each strategy profile with the terminal history it generates; based on this, we can define the payoff for each player with respect to strategy profiles by letting $V_i(s) = v_i(\zeta(s))$ for each $i \in I$ and $s \in S$. For each $h\in \mathcal{H}$, we define $S(h)$ to be the set of strategy profiles that lead to $h$, that is, $S(h) : = \{s \in S: h \preceq \zeta(s)\}$, and we let $S_{i}(h) := \proj_{i}S(h)$ and $S_{-i}(h) := \proj_{-i}S(h)$.\footnote{For complete definitions of the symbols, refer to Battigalli et al. \cite{betal20}, Chapter 9.}  Note that when there is only one stage (and consequently everyone moves simultaneously at the beginning), a strategy degenerates into an action, and $\Gamma$ is equivalent to a static game. 



As in Section \ref{sec:sta}, given $\Gamma$, we can define a multistage game with payoff uncertainty $\hat{\Gamma}= \langle I, (\Theta_j, A_j, \mathcal{A}_j, u_j)_{j \in I} \rangle$, where for each $i \in I$, $\Theta_i= \{\theta_{ik}: k \in \mathbb{N}_0\}$ is the set of types, and for each $\theta = (\theta_i, \theta_{-i})$ and $h \in \mathcal{Z}$, $u_i(\theta, h) = 0$ if  $\theta_i = \theta_{i0}$ and $u_i(\theta, h) = v_i(h)$ otherwise. For each $s \in S$, we define $U_i(\theta, s) = u_i(\theta, \zeta(s))$.\medskip


To describe players' beliefs in a dynamic situation, we need a more sophisticated notion, because, as the game unfolds, players have to update and revise their beliefs based on their observations. A \emph{conditional probability system} (CPS) for player $i$ is a collection $\mu^i = (\mu^i(\cdot|h))_{h \in \mathcal{H}} \in \prod_{h \in \mathcal{H}}\Delta(\Theta_{-i} \times S_{-i}(h))$ such that for all $\overline{\theta}_{-i} \in \Theta_{-i}$, $\overline{s}_{-i} \in S_{-i}$, $h^{\prime}, h^{\prime\prime} \in \mathcal{H}$ with $h^{\prime}\prec h^{\prime\prime}$, 
\begin{equation}\label{chainrule}
\mu^i(\overline{\theta}_{-i}, \overline{s}_{-i}|h^{\prime}) = \mu^i(\overline{\theta}_{-i}, \overline{s}_{-i}|h^{\prime\prime})\left(\sum_{\theta_{-i} \in \Theta_{-i}, s_{-i} \in S_{-i}(h^{\prime\prime})}\mu^i(\theta_{-i}, s_{-i}|h^{\prime})\right)
\end{equation}
In words, a CPS assigns to each non-terminal history a belief about her opponent's types and strategies consistent with that history, which satisfies the (Bayes) chain rule in (\ref{chainrule}). The set of CPSs for player $i$ is denoted by $\Delta^{\mathcal{H}}(\Theta_{-i} \times S_{-i})$. Given $\mu^i \in \Delta^{\mathcal{H}}(\Theta_{-i} \times S_{-i})$, as in Section \ref{sec:sta}, we define the following shorthands: for each $\theta_{-i} \in \Theta_{-i}$ and $h \in \mathcal{H}$, $\mu^i(\theta_{-i}|h):=\sum_{s_{-i} \in S_{-i}(h)}\mu^i(\theta_{-i}, s_{-i}|h)$; for each  $\theta_{-i}$, $s_{-i}$, and $h$, when $\mu^i(\theta_{-i}|h) >0$, $\mu^i(s_{-i}|h, \theta_{-i}):= \frac{\mu^i(\theta_{-i}, s_{-i}|h)}{\mu^i(\theta_{-i}|h)}$.

The notion of best response needs a dynamic extension. A strategy $s_i \in S_i$ is \emph{sequentially rational} for type $\theta_i$ with respect to CPS $\mu^i \in \Delta^{\mathcal{H}}(\Theta_{-i} \times S_{-i})$ iff for all $h \in \mathcal{H}$ with $s_i \in S_i(h)$,
\begin{equation*}
\mathbb{E}_{\mu^i(\cdot|h)}U_i(\theta_i, s_i, \cdot) \geq \mathbb{E}_{\mu^i(\cdot|h)}U_i(\theta_i, s^\prime_i, \cdot) \text{ for all }s_i^\prime \in S_i(h)
\end{equation*}
That is, $s_i$ is a best response for $\theta_i$ at each history that consistent with $s_i$. We use $r_i(\hat{\theta}_i, \mu^i)$ to denote the set of all strategies sequentially rational for $\hat{\theta}_i$ for $\mu^i$.\medskip

We extend K1-K3 in Section \ref{sec:sta} to define the restriction $\Delta^{\kappa}$ here. One intuitive way is as follows: for each $i \in I$ and $\theta_i \in \Theta_i$, $\mu^i \in \Delta^{\theta_{ik}}$ if and only if the following three conditions are satisfied:

\begin{enumerate}
\item[\textbf{DK1}.] At each $h \in \mathcal{H}$, $\supp\marg_{\Theta_{-i}}\mu^i(\cdot|h)\subseteq \{\theta_{-i,0}, \theta_{-i,1},...,\theta_{-i,k-1}\}$,

\item[\textbf{DK2}.] At each $h \in \mathcal{H}$, if $\mu^i(\theta_{-i,0}|h) >0$, $\mu^i(s_{-i}|h, \theta_{-i,0})=\frac{1}{|S_{-i}(h)|}$ for each $s_{-i} \in S_{-i}(h)$.

\item[\textbf{DK3}.] For each $t = 0,...,k-1$, $\mu^i(\theta_{-i,t}|\varnothing) = f^k(t)$.



 \end{enumerate}
 
 
By comparing with K1 -- K3 in Section \ref{sec:sta}, one can straightforwardly see the meaning of the three conditions within the context of multistage games. DK1 and DK2 generalizes K1 and K2 by requiring the conditions to hold at each non-terminal history. DK3 states that at the beginning of the game, the belief of a player with type $\theta_{ik}$ is a normalization with respect to distribution $f$.

 We preserve the name $\Delta^\kappa$ since in the degenerate case (that is, $\Gamma$ is equivalent to a static game), condition DK$n$ coincides with K$n$ for $n = 1,2,3$; in this sense, as we mentioned before, the conditions here are the special cases of those in Section \ref{sec:sta}. Note that DK2 can be equivalently rephrased in a behavioral way: at each history $h\in \mathcal{H}$, if $\theta_{-i,0}$ is deemed possible (i.e., with positive probability), then under $\theta_{-i,0}$ each action in $\mathcal{A}_{-i}(h)$ is deemed to appear with equal probabilities (see Battigalli \cite{pb23} for a detailed discussion). However, as pointed out in Battigalli \cite{pb23}, even though it is frequently used in the experimental literature, in general, K2 is not equivalent to uniform distribution of \emph{reduced} strategies conditional on  $\theta_{-i,0}$ at each $h$. One might want to modify DK2 into RDK2, where strategy is replaced by reduced strategy; yet that might be characterized by different behavioral consequences.\footnote{See Battigalli \cite{pb23} for the conditions for the coincidence of their behavioral consequences.}
 

 Note that as long as $f(0) >0$, DK2 and DK3 imply that a player will never be surprised, that is, there is no history that she deems impossible at the beginning, because, according to the chain rule, she always deems $\theta_{-i, 0}$ possible and under $\theta_{-i, 0}$, at each history, every action of her opponent is possible. Therefore, only the chain rule in (\ref{chainrule}) matters; it does not matter which notion of belief system is adopted (forward consistent, standard, or complete consistent; see Battigalli, Catonini, and Manili \cite{bcm23}).\medskip

 Now we go to the epistemic foundation. In dynamic situations, rationality means sequential rationality. 
 There are several ways to extend the concept of belief in some event (e.g., rationality): the point is the condition on revision of one's initial belief when it is rejected by observation. By incorporating a \emph{forward-induction criterion}, that is, maintaining the belief of the event at all histories that are consistent with the event, Battigalli and Siniscalchi's \cite{bs02} formulate the classical notion called \emph{strong belief} which is later incorporated into the framework of Battigalli and Siniscalchi's \cite{bs03} $\Delta$-rationalizability. Here, by applying Battigalli and Siniscalchi's \cite{bs03} argument, the behavioral consequences of  rationality (R), common strong belief in rationality (CSBR), and transparency of $\Delta^{\kappa}$ (TCK)\footnote{Here, the term transparency is also adapted into the dynamic situation and means that $\Delta^\kappa$ holds and it is commonly \emph{strongly} believed to hold.} are characterized by the iterative procedure defined as follows.

\begin{definition}\label{def:dy}
Consider the following procedure, called $\Delta^{\kappa}$\emph{\textbf{-rationalization procedure}}:

\textbf{Step 0}. For each $i \in I$, $\Sigma_{i,\Delta^{\kappa}}^{0} = \Theta_{i} \times S_{i}$,

\textbf{Step $n+1$}. For each $i = 1,2$ and each $(\theta_{i},s_{i}) \in \Theta_{i} \times S_{i}$ with $(\theta_{i},s_{i}) \in \Sigma_{i,\Delta^{\kappa}}^{n}$, $(\theta_{i},s_{i}) \in \Sigma_{i,\Delta^{\kappa}}^{n+1}$ iff the there is some CPS $\mu^{i} \in \Delta^{\theta_{i}}$ such that
\begin{enumerate}
\item $s_{i} \in r_i(\theta_i, \mu^i)$;
\item for each $h \in \mathcal{H}$, if $\sum_{-i, \Delta^{\kappa}}^{n-1} \cap [\Theta_{-i} \times S_{-i}(h)]
 \neq \emptyset$, then $\mu^i(\sum_{-i, \Delta^{\kappa}}^{n-1}|h) = 1$.
\end{enumerate}
Finally, let $\Sigma_{i,\Delta^{\kappa}}^{\infty} =\cap_{n\geq 0}\Sigma_{i,\Delta^{\kappa}}^{n} $ for $i = 1,2$.  The elements in $\Sigma_{i,\Delta^{\kappa}}^{\infty}$ are said to be \textbf{\emph{$\Delta^{\kappa}$-rationalizable}}.
\end{definition}
One can see that when $\Gamma$ is degenerate, Definitions \ref{def:dy} and \ref{def:chp} coincide.


\begin{example}
Consider the game in Figure \ref{fig:UN8} (Lin and Palfrey \cite{lp24}). Suppose that $f = \poisson(1.5)$. 

\begin{figure}[h!] 
\centering
  \includegraphics[width=0.6\columnwidth]{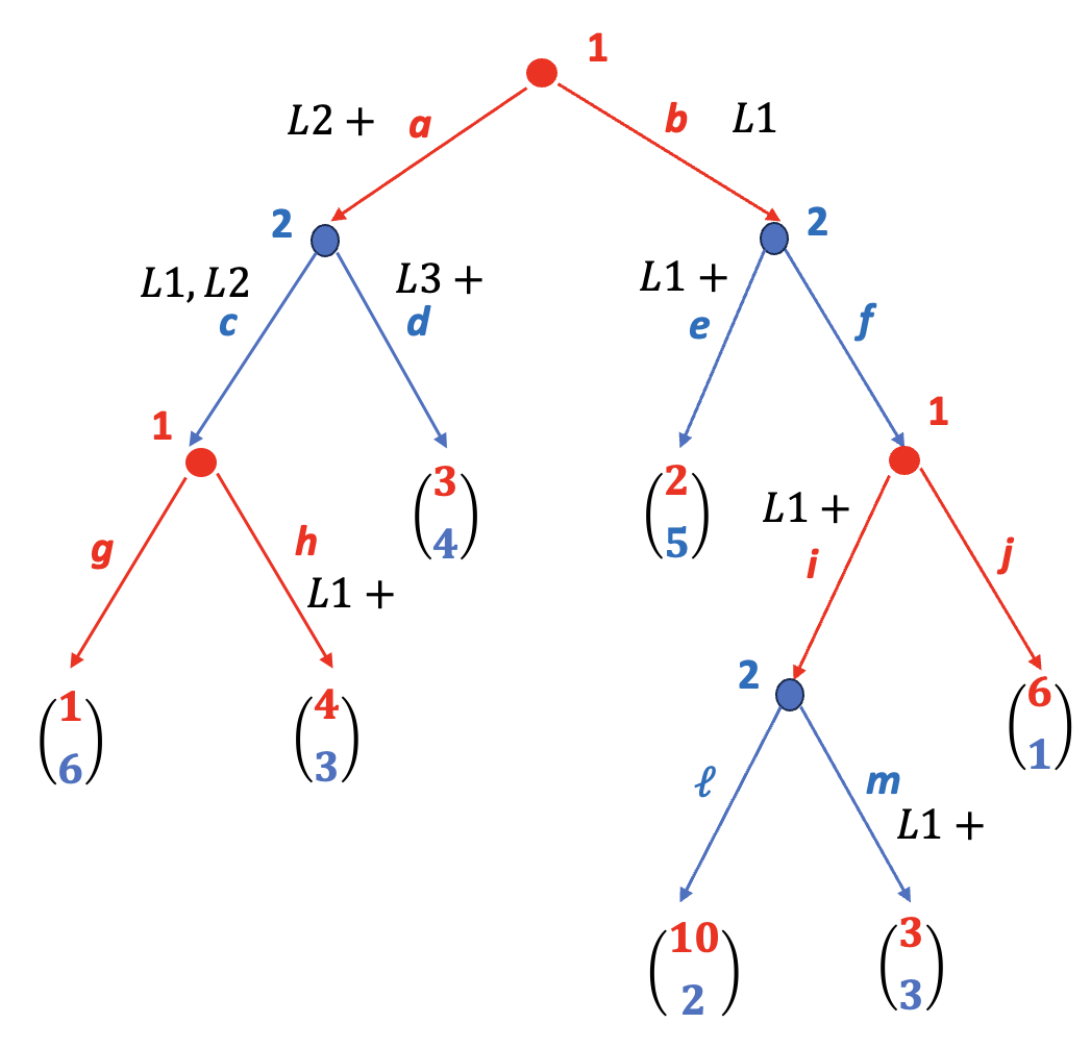}
  \caption{Palfrey and Line's Example 4.2.1, p.15}
  \label{fig:UN8}
\end{figure}

Starting from step 1. Consider $\theta_{11}$, i.e., player 1 with the level-$1$ type . Note that $\mu^1 \in \Delta^{\theta_{11}}$ and $\mu^1(\Sigma_{2, \Delta^{\kappa}}^{0}|\varnothing) = 1$ if and only if $\mu^1((\theta_{20}, s_2)|\varnothing) = \frac{1}{8}$ for each $s_2 \in S_2$. Only two strategies of player 1, b.g.i and b.h.i, are optimal to the initial belief. To see which one is sequentially rational, we only need to check the choice at history $(ac)$, where one can see easily that $g$ is dominated by $h$. Hence $\Sigma_{1, \Delta^\kappa}^{1}(\theta_{11}) = \{$b.h.i$\}$. In a similar manner we can see that, for a level-1 player 2, $\Sigma_{2, \Delta^{\kappa}}^{1}(\theta_{21}) = \{$c.e.m$\}$.

Note that some strategies could also be deleted under $\theta_{i2}$ (or types with higher sophistication) \emph{at step 1}. For example, it is easy to see that for each $s_1 \in S_1$ with $s_{1}(ac) =g$, $(\theta_{12},s_1) \not\in \Sigma_{1, \Delta^{\kappa}}^{1}$. 

For step 2, we consider $\theta_{12}$, i.e., player 1's level-$2$ type. Now for each $\mu^1 \in \Delta^{\theta_{12}}$, at the beginning of the game it has to satisfy $\supp\mu^1(\cdot|\varnothing) = \left(\{\theta_{20}\} \times S_2 \right) \cup \{(\theta_{21}, c.e.m)\}$, $\mu^1(\theta_{20}|\varnothing) =0.4$, $\mu^1(\theta_{21}|\varnothing) =0.6$, and $\mu^1((s_2)|\varnothing, \theta_{20}) = \frac{1}{8}$ for each $s_2 \in S_2$. The best responses to such a belief are $a.g.i$ and $a.h.j$. Note that the difference for the two strategies is at the history $(bf)$. Note that $c.e.m \not\in S_2(bf)$, the belief of player 1 in $\Delta^{\theta_{12}}$ with $\mu^1(\sum_{2, \Delta^{\kappa}}^{1}|(bf)) = 1$ must be $\mu^1((\theta_{20}, s_2)|(bf)) = \frac{1}{4}$ for each $s_2 \in S_2(bf)$, which implies that by choosing $i$ player 1 gets expected payoff $6.5$, and by $j$ $6$. Therefore, for a level-2 player 1, $\Sigma_{1, \Delta^{\kappa}}^{2}(\theta_{12}) = \{$a.h.i$\}$. In a similar manner one can see that $\Sigma_{2, \Delta^{\kappa}}^{2}(\theta_{22}) = \{$c.e.m$\} = \Sigma_{2, \Delta^{\kappa}}^{1}(\theta_{21}) $.

One can continue this procedure, and the outcome is summarized in Figure \ref{fig:UN8}.
\end{example}

It can be easily seen that Proposition \ref{coj:lk} still holds here. Further, we show that $\Delta^{\kappa}$-rationalizability generically coincides with DCH solution, a dynamic extension of CH recently introduced in Lin and Palfrey \cite{lp24}. In this sense, we provide an epistemic foundation for DCH.

As we did before, we first rephrase the algorithm for computing DCH solution in a fashion that facilitates the comparison.

\begin{definition}\label{def:dchp}
Consider the following procedure, called the \emph{DCH-procedure}:

\textbf{Step 0}. For each $i \in I$ and $\theta_i \in \Theta_i$, $\Lambda_{\theta_i,\Delta^{\kappa}}^{0} = \{\theta_{i}\} \times S_{i}$,

\textbf{Step $n+1$}. For each $i \in I$, $\Lambda_{\theta_{ik},\Delta^{\kappa}}^{n+1} = \Lambda_{\theta_{ik},\Delta^{\kappa}}^{n}$ if $k \neq  n+1$; for each $(\theta_{i, n+1},s_i)$ with $s_i \in S_i$, $(\theta_{i,n+1},a_{i}) \in \Lambda_{\theta_{i,n+1},\Delta^{\kappa}}^{n+1}$ iff there is some $\mu^i$ satisfying DK1 -- DK3 such that
\begin{enumerate}
\item $s_{i}\in r^i(\theta_{i,n+1}, \mu^i)$, 
\item $\mu^i$ satisfies the following conditions:
\begin{itemize}
\item[2.1.] $\supp\mu^{i}(\cdot|\varnothing)=\cup_{t=0}^{n}\Lambda_{\theta_{-i,t},\Delta^{\kappa}}^{n}$,

\item[2.2.] For each $\theta_{-i,t}$ ($t \leq n$) and $s_{-i}, s_{-i}^{\prime} \in S_{-i}$, if $(\theta_{-i,t}, s_{-i}), (\theta_{-i,t}, s_{-i}^{\prime}) \in \Lambda_{\theta_{-i,t},\Delta^{\kappa}}^{n}$, $\mu^i(\theta_{-i,t}, s_{-i}|\varnothing)=\mu^i(\theta_{-i,t}, s_{-i}^{\prime}|\varnothing)$.
\end{itemize}

\end{enumerate}
 We let $\Lambda_{-i,\Delta^{\kappa}}^{n} := \cup_{\theta_{-i} \in \Theta_{-i}}\Lambda_{\theta_{-i},\Delta^{\kappa}}^{n}$ and $\Lambda_{i,\Delta^{\kappa}}^{\infty} =\cap_{n\geq 0}\Lambda_{i,\Delta^{\kappa}}^{n} $ for each $i \in I$. $\Lambda_{\Delta^{\kappa}}^{\infty}:=\times_{i \in I} \Lambda_{i,\Delta^{\kappa}}^{\infty}$ is called the \emph{DCH-solution}.
 \end{definition}
 
The only gap between our definition and Lin and Palfrey's \cite{lp24} is at condition 2 in Definition \ref{def:dchp}: there, the condition are only defined on initial beliefs instead of at every history. Yet, as discussed before, since in this setting there is no ``surprise'', the conditions on initial beliefs could be ``faithfully inherited'' as the game unfolds. Formally, we have the  following lemma which shows that the outcome generated by the procedure in Definition \ref{def:dchp} coincides with Lin and Palfrey's \cite{lp24} (Section 3.2) DCH solution.
 
 \begin{lemma}\label{lma}
 For each $i \in I$, $n \in \mathbb{N}_0$, and $h \in \mathcal{H}$, define $\Theta_{-i}^{n+1}(h) =\{\theta_{-i, t}: t \leq n,$ and there is $s_{-i} \in S_{-i}$ with $(\theta_{-i,t}, s_{-i}) \in \Lambda_{\theta_{-i,t},\Delta^{\kappa}}^{n}$ and $s_{-i} \in S_{-i}(h)\}$. Then the belief $\mu^i$ satisfies the conditions (i.e., KD1-- KD3, conditions 1, 2.1, and 2.2) at step $n+1$ if and only if for each $h \in \mathcal{H}$, $\theta_{-i, t} \in \Theta_{-i}^{n+1}(h)$
 \begin{itemize}
 \item[(a)] $\mu^i(\theta_{-i, t}|h) = \frac{f(t)}{\sum_{\theta_{-i, \ell}\in \Theta_{-i}^{n+1}(h)}f(\ell)}$,
 
 \item[(b)] $\mu^i(s_{-i}|h) = \sum_{\theta_{-i, \ell}\in \Theta_{-i}^{n+1}(h)}\frac{\mu^i(\theta_{-i, t}|h)}{|\Lambda_{\theta_{-i,t},\Delta^\kappa}^{n}(h)|}$ for each $s_{-i} \in S_{-i}(h)$, where $\Lambda_{\theta_{-i,t},\Delta^\kappa}^{n}(h): =\{(\theta_{-i,t}, s_{-i}) \in |\Lambda_{\theta_{-i,t},\Delta^\kappa}^{n}: s_{-i} \in S_{-i}(h)\}$.
 \end{itemize}
 \end{lemma}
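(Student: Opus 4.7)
I plan to prove the iff by separating the analysis at the initial history $\varnothing$ from the analysis at arbitrary histories $h \in \mathcal{H}$, exploiting the ``no surprise'' property that, since $f(0) > 0$ together with DK2, every history is assigned positive probability under $\mu^i(\cdot|\varnothing)$, so the chain rule (\ref{chainrule}) pins down conditional beliefs at every $h$ from the initial belief.

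For the forward implication, suppose $\mu^i$ satisfies DK1--DK3 and conditions 1, 2.1, 2.2. At $h = \varnothing$ we have $\Theta_{-i}^{n+1}(\varnothing) = \{\theta_{-i,0}, \ldots, \theta_{-i,n}\}$, so $\sum_{\theta_{-i,\ell} \in \Theta_{-i}^{n+1}(\varnothing)} f(\ell)$ is the normalizing constant defining $f^k$, and (a) reduces to DK3; condition (b) at $\varnothing$ follows directly from 2.1--2.2 together with DK2 applied to the level-$0$ component. For a general history $h$, I would expand $\mu^i(\theta_{-i,t}, s_{-i}|h)$ using the chain rule, restrict the initial-belief support to pairs with $s_{-i} \in S_{-i}(h)$, and renormalize; combining the uniformity provided by 2.2 (and by DK2 for level $0$) with the type-marginal description from DK3 yields, after renormalization, the expressions in (a) and (b). For the reverse implication, one specializes (a) and (b) to $h = \varnothing$ to recover DK3, 2.1, and 2.2; DK1 is immediate from the support constraint $\Theta_{-i}^{n+1}(h) \subseteq \{\theta_{-i,t} : t \leq n\} \subseteq \{\theta_{-i,0}, \ldots, \theta_{-i,k-1}\}$; and DK2 follows by applying (b) with $\theta_{-i,0}$, since $\Lambda_{\theta_{-i,0},\Delta^\kappa}^{n}(h) = \{\theta_{-i,0}\} \times S_{-i}(h)$ (level-$0$ places no restriction on strategies), making the conditional distribution uniform on $S_{-i}(h)$.

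The main obstacle is reconciling the chain-rule-driven renormalization at $h$ with the clean formula (a): a direct Bayesian computation produces a marginal proportional to $f(t) \cdot |\Lambda_{\theta_{-i,t},\Delta^\kappa}^{n}(h)| / |\Lambda_{\theta_{-i,t},\Delta^\kappa}^{n}|$ rather than $f(t)$ renormalized over $\Theta_{-i}^{n+1}(h)$. Bridging this gap is the crux of the lemma and, I expect, must lean on the fact that the conditions 2.1, 2.2 are the dynamic analogues of DK1, DK2 and should be read as holding at every reached history (parallel to how DK1 and DK2 are stated at every $h$); this is exactly the interpretive move that makes the author's Definition \ref{def:dchp} align with Lin and Palfrey's DCH belief specification at each information set. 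The no-surprise property guaranteed by $f(0) > 0$ is the essential technical ingredient that keeps $\Theta_{-i}^{n+1}(h)$ nonempty and the conditional beliefs well-defined throughout.
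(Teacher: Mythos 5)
Your reading of the lemma is considerably more careful than the paper's own proof, which consists of the two assertions that the ``if'' part is straightforward and that the ``only if'' part follows from the chain rule and condition 2.2. You have put your finger on exactly the right spot: Bayesian updating of an initial belief satisfying DK3, 2.1 and 2.2 yields, at a history $h$, a type-marginal proportional to $f(t)\,|\Lambda^{n}_{\theta_{-i,t},\Delta^{\kappa}}(h)|/|\Lambda^{n}_{\theta_{-i,t},\Delta^{\kappa}}|$, not to $f(t)$ renormalized over $\Theta^{n+1}_{-i}(h)$ as formula (a) claims. However, you do not close this gap, and the bridge you propose cannot work: since $f(0)>0$ and DK2 give every $h\in\mathcal{H}$ positive probability under $\mu^{i}(\cdot|\varnothing)$, the chain rule (\ref{chainrule}) already determines $\mu^{i}(\cdot|h)$ uniquely from $\mu^{i}(\cdot|\varnothing)$, so there is no residual freedom to ``also impose'' 2.1 and 2.2 at non-initial histories; reinterpreting those conditions as holding at every reached history either adds nothing or contradicts the chain rule. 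The two formulas agree precisely when the survival ratio $|\Lambda^{n}_{\theta_{-i,t},\Delta^{\kappa}}(h)|/|\Lambda^{n}_{\theta_{-i,t},\Delta^{\kappa}}|$ is constant across the types in $\Theta^{n+1}_{-i}(h)$ --- as at $h=\varnothing$, or whenever each $\Lambda^{n}_{\theta_{-i,t},\Delta^{\kappa}}$ is a singleton, which is the generic case relevant for Theorem \ref{dcoinc} --- but not in general.

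A complete argument therefore has to either (i) replace (a) by the genuine Bayesian posterior $\mu^{i}(\theta_{-i,t}|h)\propto f(t)\,|\Lambda^{n}_{\theta_{-i,t},\Delta^{\kappa}}(h)|/|\Lambda^{n}_{\theta_{-i,t},\Delta^{\kappa}}|$, which is also what Lin and Palfrey's belief updating actually produces, or (ii) restrict the equivalence to the class of histories and games in which the ratio is type-independent. Note also that your reverse direction inherits the same problem in a different guise: a family $(\mu^{i}(\cdot|h))_{h\in\mathcal{H}}$ built from (a) and (b) at every $h$ need not satisfy (\ref{chainrule}) and hence need not be a CPS at all, so one cannot simply specialize to $h=\varnothing$ without first verifying that such a $\mu^{i}$ exists. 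The pieces of your argument that are sound --- the identification of $\Theta^{n+1}_{-i}(\varnothing)$ with $\{\theta_{-i,0},\dots,\theta_{-i,n}\}$, the recovery of DK3, 2.1 and 2.2 from (a) and (b) at $\varnothing$, and the observation that DK2 follows because $\Lambda^{n}_{\theta_{-i,0},\Delta^{\kappa}}(h)=\{\theta_{-i,0}\}\times S_{-i}(h)$ --- are correct and worth keeping; the crux you flagged, however, remains open in your write-up, and the paper's own proof does not engage with it either.
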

\begin{proof}
The if part is straightforward. The only-if part follows from the chain rule (\ref{chainrule}) and condition 2.2.
\end{proof}

Using Lemma \ref{lma}, we can show the following statement in a manner similar to Theorem \ref{coinc}.

\begin{theorem}\label{dcoinc}
 
 \begin{enumerate}
 \item For each $i \in I$ and $k \in \mathbb{N}_0$, $\Lambda_{\theta_{ik},\Delta^{\kappa}}^{k} \subseteq \Sigma_{\theta_{ik},\Delta^{\kappa}}^{k}$; consequently, $\Lambda_{i,\Delta^{\kappa}}^{\infty} \subseteq \Sigma_{i,\Delta^{\kappa}}^{\infty}$.
 
 \item For generic multistage games, for each $i \in I$ and $k \in \mathbb{N}_0$, $\Lambda_{\theta_{ik},\Delta^{\kappa}}^{k} = \Sigma_{\theta_{ik},\Delta^{\kappa}}^{k}$; consequently, $\Lambda_{i,\Delta^{\kappa}}^{\infty} = \Sigma_{i,\Delta^{\kappa}}^{\infty}$.
 \end{enumerate}
 \end{theorem}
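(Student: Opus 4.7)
The plan is to imitate the proof of Theorem \ref{coinc} with the two substantive modifications required by the dynamic setting: replacing beliefs with CPSs, and replacing best response with sequential rationality. The structural reason the argument still works is that the DCH-procedure (Definition \ref{def:dchp}) is a strict specialization of the $\Delta^\kappa$-procedure (Definition \ref{def:dy})---the former pins down the CPS by 2.1 and 2.2 while the latter only requires support inclusion---so every DCH-surviving type--strategy pair should be $\Delta^\kappa$-rationalizable at the matching step. Lemma \ref{lma} will be the bridge, because it extends the initial-history conditions 2.1--2.2 to every history that is still reachable under some surviving type, which is exactly what condition 2 of Definition \ref{def:dy} requires.

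For part 1, I would induct on $k$. The cases $k=0$ and $k=1$ follow from the fact that the unique CPS admitted by $\Delta^{\theta_{i1}}$ is the same in both procedures (no freedom beyond DK1--DK3 at level $1$). For the inductive step, fix $(\theta_{ik}, s_i) \in \Lambda_{\theta_{ik},\Delta^{\kappa}}^{k}$ and let $\overline{\mu}^i$ be the CPS witnessing this in the DCH-procedure. By Lemma \ref{lma}, at every $h \in \mathcal{H}$ the conditional belief $\overline{\mu}^i(\cdot|h)$ is supported on $\bigcup_{t=0}^{k-1}\Lambda_{\theta_{-i,t},\Delta^{\kappa}}^{k-1}(h)$ whenever this set is nonempty; using the inductive hypothesis $\Lambda_{\theta_{-i,t},\Delta^{\kappa}}^{t} \subseteq \Sigma_{\theta_{-i,t},\Delta^{\kappa}}^{t}$ for $t<k$ together with the fact that $\Sigma$-sets are monotone across steps (so $\Sigma_{\theta_{-i,t},\Delta^{\kappa}}^{k-1}=\Sigma_{\theta_{-i,t},\Delta^{\kappa}}^{t}$ by Proposition \ref{coj:lk}), I conclude $\overline{\mu}^i(\Sigma_{-i,\Delta^{\kappa}}^{n-1}|h)=1$ at every $h$ for which that set meets $\Theta_{-i}\times S_{-i}(h)$. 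Sequential rationality of $s_i$ for $\overline{\mu}^i$ transfers unchanged. Iterating the $\Sigma$-procedure $k$ times starting from step $0$ then places $(\theta_{ik},s_i)$ inside $\Sigma_{\theta_{ik},\Delta^{\kappa}}^{k}$, exactly as in the static argument. The inclusion $\Lambda_{i,\Delta^{\kappa}}^{\infty}\subseteq \Sigma_{i,\Delta^{\kappa}}^{\infty}$ is immediate by taking intersections.

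For part 2, the generic argument is a perturbation argument on the payoff vector $(v_i(h))_{h \in \mathcal{Z}, i \in I}$: given $\epsilon>0$, for each history $h$ and each step $k$ at which some $\theta_{ik}$-type has multiple sequentially rational strategies, I add a payoff perturbation of size $<\epsilon/2^{k+1}$ at one terminal history breaking the tie, working backwards along $\preceq$ so that earlier perturbations are not upset. In the perturbed game, at every step $k$ each $\Sigma_{\theta_{ik},\Delta^{\kappa}}^{k}$ is a singleton on the relevant on-path behavior, so the 2.2 uniform-mixing requirement in the DCH-procedure has nothing to average over and is vacuously compatible with the $\Sigma$-witness CPS; combined with part 1 this yields $\Lambda=\Sigma$ at every step. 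Since this can be achieved by arbitrarily small perturbations, the equality holds generically.

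The main obstacle I expect is bookkeeping at histories \textbf{not} reached by any type--strategy pair surviving the previous step: there, the inductive hypothesis gives no information, but also the DCH-procedure's conditions 2.1--2.2 impose no constraint, and Definition \ref{def:dy}'s clause 2 is vacuous. One must check carefully that the chain rule (\ref{chainrule}) permits $\overline{\mu}^i$ to be extended to such off-path histories consistently with DK1--DK2, which is where the observation following DK3 about absence of surprise (guaranteed by $f(0)>0$) does the real work, and where the ``only-if'' direction of Lemma \ref{lma} is invoked. A secondary care point is distinguishing strategies from reduced strategies when tallying multiplicity in the generic perturbation, since 2.2 is imposed on full strategies.
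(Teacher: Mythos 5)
Your proposal is correct and follows essentially the same route as the paper, which gives no explicit proof of Theorem \ref{dcoinc} beyond the remark that it follows ``in a manner similar to Theorem \ref{coinc}'' using Lemma \ref{lma} as the bridge. Your write-up is in fact more careful than the paper's, notably in flagging the off-path histories where the no-surprise property from $f(0)>0$ (level-$0$ types reach every history) is what keeps condition 2 of Definition \ref{def:dy} satisfiable.
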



\end{document}